\definecolor{mainblue}{HTML}{1f77b4}
\definecolor{mainorange}{HTML}{ff7f0e}
\definecolor{maingreen}{HTML}{2cc92c}
\definecolor{mainred}{HTML}{DC3522}
\definecolor{mainpurple}{HTML}{7b5ec6}
\definecolor{mainpink}{HTML}{ff99dd}
\definecolor{pantoneblue}{HTML}{19c2a9}
\newcommand{\ie}{\textit{i.e.}\ }
\newcommand{\eg}{\textit{e.g.}\ }
\newtheorem{proposition}{Proposition}
\newtheorem{definition}{Definition}
\newtheorem{property}{Property}
\newtheorem{observation}{Observation}
\newtheorem{class}{Class}
\DeclareMathOperator{\poly}{poly}
\DeclareMathOperator{\Tr}{tr}
\DeclareMathOperator{\BQP}{\mathsf{BQP}}
\DeclareMathOperator{\BPP}{\mathsf{BPP}}
\DeclareMathOperator{\Heurpoly}{\mathsf{HeurBPP/poly}}
\DeclareMathOperator{\BPPqgenpoly}{\mathsf{BPP/qgenpoly}}
\DeclareMathOperator{\Ppoly}{\mathsf{P/poly}}
\providecommand{\hR}{\ensuremath{\hat{R}}}
\providecommand{\to}{\ensuremath{\Tilde{o}}}
\providecommand{\calA}{\ensuremath{\mathcal{A}}}
\providecommand{\calD}{\ensuremath{\mathcal{D}}}
\providecommand{\calF}{\ensuremath{\mathcal{F}}}
\providecommand{\calO}{\ensuremath{\mathcal{O}}}
\providecommand{\calP}{\ensuremath{\mathcal{P}}}
\providecommand{\calX}{\ensuremath{\mathcal{X}}}
\providecommand{\calY}{\ensuremath{\mathcal{Y}}}
\providecommand{\bbI}{\ensuremath{\mathbb{I}}}
\newif\ifverbose
\newcommand{\EGF}[1]{\ifverbose\textcolor{brown}{[EGF: #1]}\fi}
\newcommand{\PsiMPS}{\ensuremath{\Psi_{\operatorname{MPS}}}}
\newcommand{\fu}{Dahlem Center for Complex Quantum Systems, Freie Universit\"{a}t Berlin, 14195 Berlin, Germany}
\newcommand{\hzb}{Helmholtz-Zentrum Berlin f{\"u}r Materialien und Energie, 14109 Berlin, Germany}
\newcommand{\hhi}{Fraunhofer Heinrich Hertz Institute, 10587 Berlin, Germany}
\newcommand{\bsc}{Barcelona Supercomputing Center, Barcelona 08034, Spain}
\newcommand{\ub}{Universitat de Barcelona, Barcelona 08007, Spain}
\begin{document}

\title{Prospects for quantum advantage in machine learning from the representability of functions}

\author{Sergi Masot-Llima}
\thanks{These authors contributed equally.}
\affiliation{\ub}
\affiliation{\bsc}
\author{Elies Gil-Fuster}
\thanks{These authors contributed equally.}
\affiliation{\fu}
\affiliation{\hhi}
\author{Carlos Bravo-Prieto}
\affiliation{\fu}
\author{Jens Eisert}
\affiliation{\fu}
\affiliation{\hhi}
\affiliation{\hzb}
\author{Tommaso Guaita}\email[Corresponding author: ]{tommaso.guaita@fu-berlin.de}
\affiliation{\fu}

\begin{abstract}
Demonstrating quantum advantage in machine learning tasks requires navigating a complex landscape of proposed models and algorithms. To bring clarity to this search, we introduce a framework that connects the structure of parametrized quantum circuits to the mathematical nature of the functions they can actually learn. Within this framework, we show how fundamental properties, like circuit depth and non-Clifford gate count, directly determine whether a model's output leads to efficient classical simulation or surrogation. We argue that this analysis uncovers common pathways to dequantization that underlie many existing simulation methods. More importantly, it reveals critical distinctions between models that are fully simulatable, those whose function space is classically tractable, and those that remain robustly quantum. This perspective provides a conceptual map of this landscape, clarifying how different models relate to classical simulability and pointing to where opportunities for quantum advantage may lie.

\end{abstract}

\maketitle

\section{Introduction}\label{s:intro}
	
    \emph{Quantum machine learning} (QML) is recognized as a promising approach to harness quantum computing for learning tasks~\cite{dunjko2018machine,cerezo2021variational,QMLWhitePaper}.
    As with all quantum algorithms, a central question is whether QML holds potential for quantum advantage~\cite{arute2019quantum,schuld2022quantum,MindTheGaps,GrandChallenge} over classical computing.
    The counter-narrative to quantum advantage is \emph{dequantization}, where upon close inspection certain quantum algorithms yield no benefit over classical counterparts, as one can classically solve the task at hand. 
    Dequantization of quantum algorithms for \emph{machine learning}, in particular, has seen a surge of interest in recent years, leaving few claims of quantum advantage unchallenged~\cite{schreiber2023classical, cerezo2023does, landman2022classicallyapproximatingvariationalquantum, sweke2025potential, gilfuster2025relation}.
    
    While QML models for classical data can be studied from several perspectives, significant theoretical developments have emerged from investigating the function families that \emph{parametrized quantum circuits} (PQCs) can give rise to~\cite{schuldEffectDataEncoding2021, huang2021power, schuld2021kernels, jerbi2023quantum, schreiber2023classical, landman2022classicallyapproximatingvariationalquantum}.
    Characterizing the functional forms arising from PQCs allows us to delineate the boundaries of quantum learning and guide the search for advantage.
    Specifically, the concept of \emph{representability} of functions can help establish links between, e.g.,  complexity theory and the structural properties of PQCs.
    Being a close relative of \emph{classical simulation} of quantum circuits, representability is a 
    weaker notion that enables a broader discussion on dequantization and \emph{classical surrogates}.
    This is the specific perspective we explore in this work, as depicted in Fig.~\ref{fig:concept}.
    
    We begin with reviewing established classical techniques that can be used to classically simulate PQCs, such as tensor networks~\cite{berezutskii2025tensornetworksquantumcomputing,vidal2003EffClassSim}, 
    stabilizer methods \cite{aaronsonImprovedSimulationStabilizer2004, bravyiSimulationQuantumCircuits2019c}, Pauli back-propagation~\cite{schuster2024polynomialtimeclassicalalgorithmnoisy,rudolph2023classical}, and Lie algebraic methods~\cite{goh2025liealgebraicclassicalsimulationsquantum}.
    Alongside this, we revisit recent efforts to characterize the output functions of these PQCs~\cite{shinDequantizingQuantumMachine2023,cristoiu2024}.
    Building on these foundations, we propose a unified framework that illustrates \emph{how} all these methods relate to each other, based solely on the function classes they implicitly target.
    
    To structure this framework, we mirror the foundational work of Ref.~\cite{gyurik2023exponential} by identifying two key properties that capture the landscape of function families: \emph{evaluation} and \emph{identification}.
    Crucially, PQCs with restricted resources may produce function families that are efficiently \emph{evaluatable} on a classical computer, yet existing simulation techniques may fail to \emph{identify} the specific function corresponding to a given set of parametrized gates.
    \EGF{I rewrote the end of this paragraph}
    In the context of quantum advantage in learning, the distinction between evaluation and identification underpins several milestone results, including the characterization of exponential separations~\cite{gyurik2023exponential}, the potential for data to overcome identification hardness~\cite{huang2021power}, and using classical surrogates as learning models~\cite{schreiber2023classical}.
    
    Our goal is to highlight the usefulness of analyzing QML and quantum advantage \emph{through the lens of function families}, specifically via succinct classical representations of quantum functions.
    Our main contribution is a framework that thoroughly organizes prior art in a natural language.
    Central to this framework is the association of a function basis with a PQC architecture, allowing us to specify whether the resulting functions are high- or low-rank on this basis.
    \EGF{The following sentence is new.}
    While we do not claim technical novelty regarding the underlying methods, our framework distinguishes itself by focusing on the functions \emph{produced by a QML model} (the hypothesis family), rather than the functions intended to be learned (the concept class or ground truth).
    
    One family of circuits that stands out in our framework is a special case of the so-called \emph{flipped models}~\cite{Jerbi2024shadows}, where the measurement has limited quantum resources, which we study to showcase our approach.
    Furthermore, we reformulate established results within this language, which clearly spell out the worst-case hardness of the optimization tasks that arise in supervised learning.
    While we do not claim immediate breakthroughs toward practically-relevant quantum advantage in learning, we believe the framework proposed in this perspective will accelerate progress in the field.

    \begin{figure*}[t]
    	\centering
    	\includegraphics[width=0.9\linewidth]{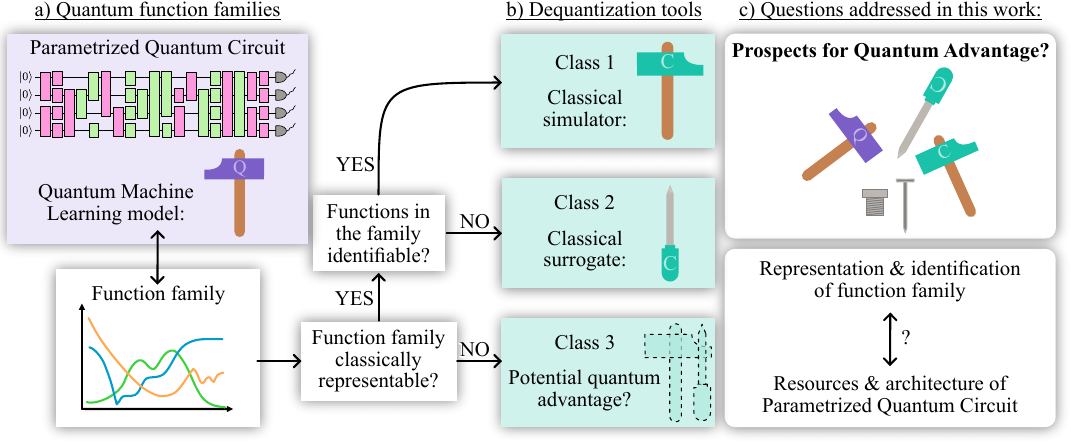}
    	\caption{Overview of the proposed framework. (a) We establish a mapping between the structure of PQCs and the hypothesis families $f$ they generate. (b) Based on the classical evaluatability of $f$ and its identifiability from the circuit parameters, we categorize PQCs into three distinct classes. (c) This classification provides a perspective for addressing key open questions in the field, such as the potential for quantum advantage and the limits of efficient classical simulation.}
    	\label{fig:concept}
    \end{figure*}

\section{Framework}\label{s:preliminaries}

    To provide a rigorous basis for our perspective, we first establish formal definitions regarding supervised learning and the dequantization of learning models. These definitions underpin the classification of PQCs presented in the subsequent sections.
    
\subsection{Supervised learning}\label{ss:supervised}

    Let $\calX$ denote the input (data) domain and $\calY$ the output (label) co-domain.
    Let further $\calD$ be an unknown probability distributionover $\calX\times\calY$, which produces input-output pairs $(x,y)\sim\calD$ according to some pattern that must be learned.
    For any function $f\colon\calX\to\calY$, the \emph{expected risk} functional $R_\calD$ quantifies the quality of $f$ in capturing the pattern hidden in $\calD$.
    In the simplest case, $R_\calD$ measures the discrepancy between $f(x)$ and $y$ averaged over the distribution.
    The pair $(\calD,R_\calD)$ thus constitutes a \emph{learning task}.

    In \emph{supervised learning}, the learner is provided with a \emph{training set} $S=\{(x_i,y_i)\}_{i=1}^N$ drawn i.i.d. from $\calD$, from which it attempts to infer the underlying pattern. 
    This process employs a \emph{learning model}, defined as a pair $(\calF, \calA)$, where $\calF$ is a set of functions termed the \emph{hypothesis family} and $\calA$ is a \emph{training algorithm}.
    Given a training set $S$, the algorithm $\calA$ returns a hypothesis $\calA(S) \in \calF$.

    In \emph{quantum machine learning} (QML), we focus primarily on models based on \emph{parametrized quantum circuits} (PQCs).
    These circuits are characterized by an initial quantum state, a sequence of quantum gates specified by classical parameters, and a measurement observable.
    Here, the hypothesis family consists of functions that can be 
    evaluated using a specific PQC architecture.
    We distinguish two main categories of QML models: \emph{variational} QML models, where the hypothesis family is explicitly \emph{parametrized} by the circuit parameters; and \emph{quantum kernel methods}, where PQC evaluations are combined via classical post-processing.
    In both approaches, the training subroutine is assumed to be \emph{classically efficient}, using black-box access to the hypothesis family and its derivatives.
    For conceptual clarity, we exclude quantum kernel methods from the following analysis unless explicitly mentioned.
    We provide more details on variational QML models in Definition~\ref{def:param_model} in Section~\ref{ss:qml}.

    The goal of the training algorithm is to identify a hypothesis that minimizes the expected risk $R_\calD$. However, since $\calD$ is unknown, the expected risk is not accessible. In practice, we optimize a proxy functional called the \emph{empirical risk} $\hR_S$, which quantifies the fit of a function $f$ to the specific training set $S$.
	The optimization problem of finding the hypothesis in $\calF$ that minimizes $\hR_S$ is known as \emph{empirical risk minimization} (ERM). Note that ERM is generally computationally hard~\cite{bittel2021training}. 

\subsection{A taste of dequantization in learning}\label{ss:dequantization}

    Following Ref.~\cite{gilfuster2025relation}, we define \emph{quantum advantage} and \emph{dequantization} relative to a specific learning task $(\calD,R_\calD)$.
    We say that an efficient QML model $(\calF,\calA)$ exhibits quantum advantage if it yields significantly better performance on the task compared to any efficient classical learning model.
    Formally, let $(\calF_C,\calA_C)$ be the best efficient classical learning model for the task. Quantum advantage requires that, with high probability over training sets $S\sim\calD^N$, there exists a significant (ideally superpolynomial) gap in the expected risks of the resulting solutions: $R_\calD(\calA(S))\ll R_\calD(\calA_C(S))$.
    Conversely, we say the QML model admits \emph{dequantization} if a classical model achieves similar or better performance, satisfying $R_\calD(\calA(S))\gtrsim R_\calD(\calA_C(S))$.

    Intuitively, for dequantization to be possible, the set of optimal solutions must be accessible via classically efficient functions.
    We identify two distinct properties that characterize this efficiency for a parametrized hypothesis family $\calF=\{f_\vartheta\,|\,\vartheta\in\Theta\}$,  where $\Theta$ denotes a set of classical variational parameters. 
    
    \begin{property}[Efficient evaluation]\label{pty:evaluatable}
		The functions $f_\vartheta(x)\in\calF$ are \emph{classically efficiently evaluatable} if, for every choice of $\vartheta$, there exists a polynomial-time classical algorithm that computes the map $x\mapsto f_\vartheta(x)$ for any input $x\in\calX$. 
	\end{property}

    This definition is deeply linked to the notion of \emph{non-uniform complexity} and \emph{advice}: the specific classical algorithm that evaluates each function may be very different for different choices of $\vartheta$.
    As discussed below, efficient evaluation alone may be insufficient for dequantization.
    For instance, even if a polynomial-time classical algorithm exists for a specific $f_\vartheta$, determining that algorithm from the parameters $\vartheta$ may be computationally intractable.
    We, therefore, define a stricter property to preclude this difficulty:
    \begin{property}[Efficient identification]\label{pty:identifiable}
		The functions $f_\vartheta(x)\in\calF$ are \emph{classically efficiently identifiable} if there exists a polynomial-time classical algorithm that computes the joint map $(x,\vartheta)\mapsto f_\vartheta(x)$ for any $x\in\calX$ and $\vartheta\in\Theta$.
	\end{property}

    In the usual language of quantum computing, Property~\ref{pty:identifiable} corresponds to the exact \emph{classical simulation} of the quantum circuit.
    Property~\ref{pty:evaluatable}, by contrast, is the focus of recent literature on \emph{classical surrogates}~\cite{schreiber2023classical, landman2022classicallyapproximatingvariationalquantum, Jerbi2024shadows, sweke2025potential, sahebi2025dequantizationsupervisedquantummachine, sweke2025kernel}.
    Broadly, a classical surrogate provides a classically efficient parametrization of a function class originally defined by a QML model, without involving any quantum circuit execution at all. A surrogate is considered \emph{accurate} if it is formally guaranteed to contain the original quantum hypothesis family. Since accurate surrogates provide access to the same hypothesis space as the QML model, they are prime candidates for dequantization. This strategy follows the maxim coined in Ref.~\cite{schreiber2023classical}: simply \enquote{train the surrogate}. We term scenarios where this approach succeeds \enquote{dequantization without simulation}.

\subsection{Abstract classification of quantum circuits}\label{ss:classes}

    We propose to organize all possible \emph{parametrized quantum circuits} 
    (PQCs) according to the functions they can give rise to.
    As we detail in Section~\ref{ss:qml}, each PQC gives rise to a set of functions in a natural way.
    We classify the circuits based on properties of their corresponding set of functions.

    \begin{tcolorbox}
    	\begin{class}[Identifiable and evaluatable functions]\label{class:1}
        
    		Circuits that give rise to functions that satisfy both Properties~\ref{pty:evaluatable} and~\ref{pty:identifiable}.
            
    	\end{class} 
    	\begin{class}[Evaluatable functions]\label{class:2}
        
    		Circuits that give rise to functions that satisfy Property~\ref{pty:evaluatable}.
    	\end{class}
    	\begin{class}[Quantum functions]\label{class:3}
        
    		The set of all quantum circuits.
    	\end{class}
    \end{tcolorbox}

    Class~\ref{class:1} includes circuits that can be \emph{classically simulated}.
    Class~\ref{class:2} relates to the \emph{classical surrogates} mentioned above.
    Finally, Class~\ref{class:3} includes general quantum circuits, whose functions may not admit succinct classical representations (according to the standard complexity-theoretic assumption $\BQP\neq\BPP$).
	Notice that the classes are nested, \ie each class is contained in the subsequent ones.
    
\section{Classical representations of PQCs}\label{s:results}

    The way we construct PQCs for learning models paints a rich landscape in which many interesting behaviors can arise.
    In this section, we analyze the general structure of these circuits, specifically highlighting features that can be exploited by classical simulation techniques, which we also review. We later map this structure to the induced hypothesis function families, enabling the classification of QML models with the framework of Section~\ref{ss:classes} from the perspective of quantum advantage in supervised learning tasks.
        
\subsection{Quantum circuit families for machine learning}\label{ss:qml}

	We approach our analysis of circuits used in standard \emph{quantum machine learning} (QML) with the data re-uploading PQCs~\cite{perez2020data} as a unifying architecture. Throughout this analysis, we let $n$ denote the number of qubits in the PQC and restrict our consideration to circuits with total depth at most polynomial in $n$.
	A typical data re-uploading PQC architecture involves:
	\begin{enumerate}
		\item An easy to prepare initial state vector $\lvert\psi_0\rangle$, commonly the computational basis state vector $\ket{0}^{\otimes n}$.
		
		\item A sequence of alternating encoding and trainable layers, $E_{d}({x}),W_d( {\vartheta}_d),E_{d-1}({x}), \dots, W_1( {\vartheta}_1),E_0({x})$, where $d \in \poly(n)$ is the circuit depth. Each $E_{k}({x})$ is a block of data encoding gates dependent on the input data ${x}$, while $W_k( {\vartheta}_k)$ is a block of trainable gates dependent on variational parameters ${\vartheta}_k$. The full parameter set is then ${\vartheta} = ({\vartheta}_1,\dots,{\vartheta}_d)$. 
		
		\item Measurement of a predetermined observable $O$, often chosen to be computationally simple, such as a sum of Pauli operators acting on few qubits (e.g., $\sum_i Z_i$ or single Pauli operators $Z_i$). 
	\end{enumerate}
	In principle, the initial state vector $\lvert\psi_0\rangle$ or the measurement observable $O$ could also depend on input data $x$.
	Unless stated otherwise, we assume they are fixed.
	We extract real-valued outputs from these circuits via expectation values $\langle O \rangle_{{x},{\vartheta}} = \bra{\psi_0}U^\dagger({x},{\vartheta}) O U({x},{\vartheta})\ket{\psi_0}$, where $U({x},{\vartheta})$ is the unitary evolution of the circuit.
	Indeed, these functions $f_\vartheta(x)=\langle O\rangle_{x,\vartheta}$ constitute the hypothesis family $\calF$: 

    \begin{definition}[Hypothesis family of parametrized quantum circuits]\label{def:param_model}
        A parametrized quantum circuit specified by $\lvert\psi_0\rangle$, $U(x,\vartheta)$, and $O$ defines a \emph{parametrized hypothesis family} $\calF$ as:
    	\begin{align}
    		\calF &\coloneqq\{f_\vartheta(x) = \bra{\psi_0}U^\dagger({x},{\vartheta}) O U({x},{\vartheta})\ket{\psi_0}\,|\,\vartheta\in
            \Theta\}.
    	\end{align}
    \end{definition}
    
	\begin{figure}[t]
		\centering
		\includegraphics{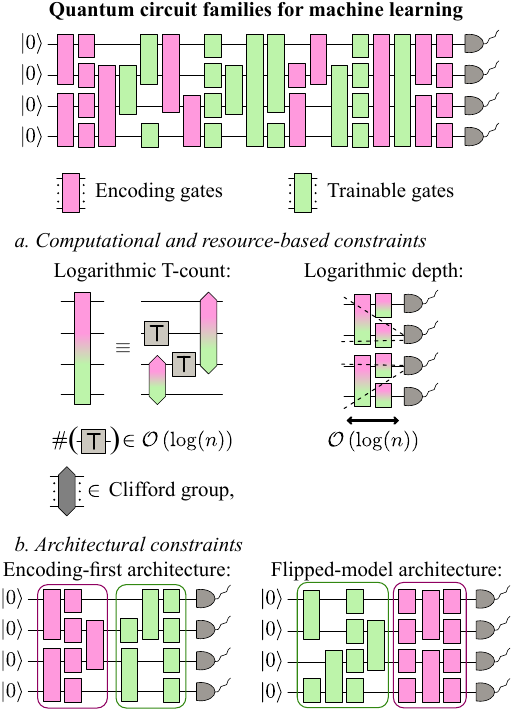}
		\caption{Schematic representation 
        of the structure elements in a PQC. We distinguish encoding gates (pink), parametrized by input data $x$, from trainable gates (light green), parametrized by $\theta$. \textbf{(a) Computational and resource-based constraints:} Blocks may be restricted by a logarithmic number of resource gates. We depict two common examples: magic resources (quantified by $T$-count) and entanglement resources (generated by locally entangling gates). \textbf{(b) Architectural constraints:} The global layout may group blocks of the same type. This work specifically focuses on encoding-first and flipped-model architectures.}
		\label{fig:circuits_intro}
	\end{figure}
	
	These ingredients allow us to classify data re-uploading architectures according to key structural and resource-based properties within this broad class of $n$-qubit, polynomial-depth data re-uploading PQCs. These properties fall into two main categories, as depicted in Fig.~\ref{fig:circuits_intro}.\medskip
	
	\paragraph{Computational and resource-based constraints.} These properties control the computational complexity and expressivity of the PQCs by restricting their non-Clifford content, entanglement depth, or algebraic structure:\medskip
    
	\noindent\textbf{Logarithmic $T$-count in trainable blocks:} The total number $t$ of $T$-gates (or other non-Clifford gates with equivalent cost) across all trainable blocks $W_k( {\vartheta}_k)$ in the circuit fulfills $t\in\calO(\log n)$. All other trainable gates are restricted to the Clifford group. 
    \medskip
	
	\noindent\textbf{Logarithmic $T$-count in encoding blocks:} The total number $t$ of $T$-gates across all data encoding blocks $E_k({x})$ in the circuit fulfills $t\in\calO(\log n)$. All other encoding gates are restricted to the Clifford group. \medskip
	
	\noindent\textbf{Logarithmic-depth encoding layers:} 
	Each encoding block $E_k({x})$ is composed of local gates in a one-dimensional connectivity. The total circuit depth of the data encoding blocks is $\calO(\log n)$. This restricts the amount of entanglement generated across qubits by introducing a light-cone of limited size. \medskip

	\noindent\textbf{Logarithmic-depth trainable layers:} Each trainable block $W_k({\vartheta}_k)$ is composed of local gates in a one-dimensional connectivity. The total circuit depth of the training blocks is $\calO(\log n)$, limiting the circuit's entanglement as for the previous constraint. \medskip
    
    \noindent\textbf{Polynomial-dimensional dynamical Lie group representation:} The dynamical Lie group generated by all gates (both encoding and trainable) used in the PQC, has a representation of dimension in $\poly(n)$. We give more details on the meaning of these terms and their relation to classical simulation in Section~\ref{sss:DLA}.\medskip
	
	\paragraph{Architectural constraints.} These properties, initially highlighted in Ref.~\cite{Jerbi2024shadows}, specify the layout of the encoding and trainable blocks in the circuit:\medskip
	
	\noindent\textbf{Encoding-first architecture:} The PQC structure is restricted to a single data-encoding block $E({x})$ that precedes a single trainable block. The quantum circuit takes the form $U({x},{\vartheta}) = W({\vartheta})E({x})$. \medskip
	
	\noindent\textbf{Flipped architecture:} Similarly, the PQC structure contains a single trainable block $W({\vartheta})$ before a single data encoding block, giving a circuit of the form $U({x},{\vartheta}) = E({x})W({\vartheta})$. \medskip
	
	These two categories reflect complementary aspects of the model: the first (a) constrains the complexity of what the circuit can do, while the second (b) constrains the structure of how it does it.
	
\subsection{Classical simulation of PQCs}\label{ss:simulation}
	
	Efficient classical simulation of quantum circuits (in full or in part) is a crucial tool for the dequantization of the corresponding QML models, as discussed in Section~\ref{ss:dequantization}. Several techniques have been developed to achieve this, each designed to exploit some specific structural aspects of the circuit. Here we survey some of the major approaches.
	
	More precisely, we define \enquote{classical simulation of a PQC} as the classical computation of the respective parametrized functions, defined as expectation values, up to an additive precision $\epsilon$.
    While stronger definitions of simulation exist, we adopt this formulation because it aligns closely with the QML setting discussed above.
    A simulation is deemed \emph{efficient} if the runtime of this computation scales polynomially with both the number of qubits $n$ and the inverse precision $1/\epsilon$. 
	
\subsubsection{Tensor networks}\label{sss:TN}

    Tensor-network methods are a classical simulation technique specialized to quantum systems with low entanglement.
    They are based on representing the amplitudes of the quantum state, as it evolves through the circuit, as a network of contractions between low rank tensors, where the dimension $\chi$ of the connected indices of the tensors is known as the \emph{bond dimension}.
    The most commonly used network architecture is a one-dimensional chain, which gives rise to so-called \emph{matrix product states} (MPS)~\cite{Verstraete_2008}. These have amplitudes in the computational basis $\ket{i_1 , \dots , i_n}$ given by
    \begin{align}
        \PsiMPS^{i_1 ,\dots , i_n} = \sum_{k} (M_1)^{i_1}_{k_1}(M_2)^{i_2 ,k_1}_{k_2} \dots (M_N)^{i_n ,k_{N-1}}.\label{eq:mps_basic}
    \end{align}
    If the entanglement of the evolved state does not grow too much during the circuit, then it is possible to represent the state as an MPS with $\chi\in\poly(n)$, which in turn implies that storing the tensors, updating them and computing expectation values of local observables can be done efficiently. Circuits where the entanglement is guaranteed to remain limited are specifically the ones with a shallow depth, \ie 
    with at most $\calO(\log(n))$ layers of local nearest neighbor gates (in a one-dimensional circuit architecture).
    
    Furthermore, we can use tensor networks also to represent the Heisenberg evolution of observables. In this case the appropriate one-dimensional tensor network structure is known as \emph{matrix product operator} (MPO). As before, an efficient MPO approximation of an evolved operator exists if it has been evolved under a shallow circuit, but also in presence of sufficiently strong circuit noise~\cite{Noh_2020,PhysRevResearch.3.023005}. The relevant quantity here that characterizes the necessary bond dimension is the \emph{local operator entanglement}~\cite{prosen_operator_2007,Bertini2020LOE}. In general, if both a state and an observable can be approximately represented as an MPS (and respectively MPO) with bond dimension $\chi\in\poly(n)$, then the corresponding expectation value can be computed efficiently. It is worth noting that local operator entanglement has been connected to magic \cite{dowling_bridging_2025}, which supports the need for a better understanding of underlying resources.
    
\subsubsection{Stabilizer states}\label{sss:StabStates}

    The stabilizer state formalism~\cite{gottesman1997stabilizercodesquantumerror} can be used to efficiently represent a quantum state evolved by a circuit composed exclusively of Clifford gates~\cite{aaronsonImprovedSimulationStabilizer2004}. If the circuit also contains a number $t$ of non-Clifford gates (for instance $T$-gates), then the resulting state can be represented as a superposition of $\calO(4^t)$ stabilizer states. Manipulating these superpositions and computing expectation values of Pauli observables similarly has a cost of $\calO(4^t)$~\cite{bravyiSimulationQuantumCircuits2019c}. It follows that if $t=\calO(\log(n))$ this type of algorithm is efficient.
    
\subsubsection{Pauli back-propagation}\label{sss:PauliProp}

    Pauli back-propagation methods rely on evolving the observable (which is assumed to be a Pauli operator) in the Heisenberg picture and representing it in terms of its coefficients in the basis of Pauli operators~\cite{schuster2024polynomialtimeclassicalalgorithmnoisy}. As the Pauli basis has $4^n$ elements, this representation is efficient only if the observable is sufficiently sparse, \ie it has support on at most $\poly(n)$ basis elements. This is for sure the case for Clifford circuits supplemented by at most $\calO(\log(n))$ non-Clifford gates (as each of them increases the Pauli basis rank of an observable by a constant factor), \ie in the same setting where stabilizer state methods are efficient.  However, it has been observed that also other settings exist where the evolved observable can be sufficiently well approximated by an expansion in the Pauli basis, suitably truncated to only have $\poly(n)$ terms~\cite{rudolph2023classical,rudolph2025paulipropagation}. This is most clearly the case in the presence of circuit noise \cite{fontana2025classical, martinez2025efficientsimulationparametrizedquantum,schuster2024polynomialtimeclassicalalgorithmnoisy,gonzalez-garcia_pauli_2025}, but has also been shown to be 
    true on average over certain ensembles of random noiseless circuits~\cite{angrisani_classically_2024}.
    
\subsubsection{Dynamical Lie groups and free fermions}\label{sss:DLA}

    The dynamical Lie group is the set of all unitaries that can be obtained by arbitrary products of gates of the form allowed in the circuits we consider. This group acts on the space of operators through its adjoint action $\hat A\mapsto U_g \,\hat A\, U_g^\dagger$. According to representation theory, this implies that the space of operators decomposes into a direct sum of irreducible representation spaces of the group, that is, linear subspaces that are invariant under the group action. Simulation methods based on this (such as $\mathfrak{g}$-sim~\cite{goh2025liealgebraicclassicalsimulationsquantum}) are applicable whenever the observable we intend to measure is contained in such an irreducible representation space of dimension $\poly(n)$. Then, the Heisenberg picture evolution of the observable can be computed efficiently within the low dimensional invariant subspace. This leads to an effective simulation method if, furthermore, the initial state admits efficient expectation values on this subspace of operators.
    
    The most commonly considered setting where these ideas apply is free-fermionic evolution~\cite{terhal_classical_2002} (also known as matchgate circuits~\cite{valiant2001} or fermionic Gaussian unitaries~\cite{guaita2024representationtheorygaussianunitary}). Here the dynamical Lie group is isomorphic to $\mathrm{SO}(2n)$ (where $n$ is the number of fermionic modes) and the $\poly(n)$-dimensional operator subspace is the one spanned by products of up to $k$ Majorana operators, where $k$ is a small constant. Note that it is possibly not a coincidence that fruitful applications of other dynamical Lie groups are very scarce. Indeed, it was recently shown that if one considers gates generated by Pauli operators, the only dynamical Lie group with dimension $\poly(n)$ that can arise is the free-fermionic one~\cite{aguilar2024classificationpauliliealgebras}.
    
\subsubsection{Combinations of approaches}\label{sss:combinations_simulation}
    
    It is important to stress that some of these classical simulation methods can be combined while retaining classical simulation efficiency when applied in the right order. For example,
    the expressions for $\langle O\rangle_{x,\vartheta}$ can still be efficiently computed for state vectors the coefficients of which are given of the form  
    Eq.\ (\ref{eq:mps_basic}),
    capturing MPS, but where the variational set of states is being replaced by
    \begin{equation}
    U(V)
    \sum_{i_1,\dots, i_n}
    \PsiMPS^{i_1 ,\dots , i_n} 
    |i_1,\dots, i_n\rangle,
    \end{equation}
    for $V\in SO(2n)$ reflecting a free-fermionic parity-preserving Majorana mode transformation on $n$ modes and $U(V)$ being its 
    representation in Hilbert space (the metaplectic representation),
    so that $\Theta \simeq 
    \{M_1,\dots, M_N, V\}$. Such approaches are important in classical simulations and dequantizations, as they can combine the capabilities of tensor networks to capture low entanglement states with the possibilities of the free fermionic operations removing entanglement independent of any geometric 
    locality \cite{PhysRevLett.117.210402,krumnow2019overcomingentanglementbarriersimulating,Wu_2025,Schuch_2019}.   
    
    There are further simulation approaches that combine and generalize these settings, specially in the noisy case. For 
    example,
    tensor networks and commuting quantum gate sets with arbitrary geometric locality can be combined \cite{PhysRevB.84.125103}.
    Then, in Ref.~\cite{cristoiu2024}, the author explores a method to simulate settings that are close to $\mathfrak{g}$-sim \cite{goh2025liealgebraicclassicalsimulationsquantum} using pruning techniques that have proven useful for stabilizer states \cite{rudolph2023classical}. Tensor-network methods
    have also been studied in a setting where partial simulation with 
    stabilizer states \cite{PRXQuantum.6.010345,masot2024stabilizer} is possible.
    Ultimately, the success of these methods relies on some equivalent resource $r$, related to those above in a potentially complicated manner, growing as $\calO(\log(n))$. Although we do not discuss the specific structure for each combination, this means at a high level it is easy to translate into the cases we approach, fitting our framework nonetheless.

\subsection{Function families and representations}\label{ss:functions}
	
	The structure of the functions produced by PQCs depends on how the input variables $x$ and trainable parameters $\vartheta$ are encoded into the circuit. This encoding plays a decisive role in determining whether the resulting outputs $f_\vartheta({x})$ can be computed efficiently on a classical computer.
    More specifically, there are different ways in which a classically efficient description of a function may appear. Generally, one may always represent a function by decomposing it on a functional basis as a vector of basis coefficients. From this viewpoint, the circuit implements a map that assigns, to each parameter choice $\vartheta$, a coefficient array that weights a collection of basis functions of the input. A generic decomposition takes the form
	\begin{align}
		f_\vartheta({x}) = \sum_i C_i (\vartheta) T_i(x)\,,
	\end{align}
	where $T_i$ are basis functions (for example Fourier modes or other convenient bases) and $C_i(\vartheta)$ are coefficients that depend on the trainable parameters. In what follows, we always assume that $T_i$ is an appropriately fixed classically efficient functional basis.
	
	Two conceptually different routes then lead to classical efficiency of $f_\vartheta$. The first is sparsity or low-rank: only a polynomial number $m$ of basis elements are needed, in which case efficient description and evaluation are straightforward, provided each $T_i$ can be evaluated efficiently; we refer to this as having low-rank coefficients. The second is compressibility: although the index set of coefficients may be exponentially large, the coefficient array itself factorizes or admits a compact representation (for instance as a tensor network) so that it can be stored and contracted in polynomial time \cite{MANDy,QuantumInspired,B01Goessmann,shin2025newperspectivesquantumkernels}. This can be clearly seen when the basis is local, meaning each basis function factorizes over the input components according to
	\begin{align}\label{eq:function-sparse}
		f_\vartheta({x}) = \sum_{i} C_{i_1,i_2,\dots , i_n}(\vartheta) T_{i_1}(x_1) T_{i_2}(x_2)\dots T_{i_n}(x_n).
	\end{align}
	If the coefficient tensor $C_{i_1,i_2,\dots , i_n}(\vartheta)$ admits an efficient tensor-network decomposition (e.g., it can be represented as an MPS with polynomial bond dimension), then contraction against the local factors yields an output that can be evaluated efficiently for any $x$. In that case, we may write the model as
	\begin{align}\label{eq:MPS-function}
		f_\vartheta({x}) = \sum_{i} \PsiMPS^{i_1, \dots  ,i_n}(\vartheta) T_{i_1}(x_1) T_{i_2}(x_2)\dots T_{i_n}(x_n),
	\end{align}
	emphasizing that the classical cost is controlled by the bond dimension of $\PsiMPS$. 
    
    A related special instance that we may also consider is when the function is sparse in this local product basis. That is 
    \begin{align}
		\sum_{i}^{m} C_{i}(\vartheta) T_{i_1}(x_1) T_{i_2}(x_2)\dots T_{i_n}(x_n)\,,
	\end{align}
	where the number $m$ of multi-index combinations $i_1, \dots, i_n$ that appear in the sum is in $\poly(n)$ to retain an efficient simulation. Note that this is in fact also a subset of tensor-network compressibility: indeed any finite sum of $m$ product terms can be encoded as an MPS with bound-dimension $\chi = m$. As stated above, the two approaches can also be combined, having sparsity and compressibility present in the same ansatz in the above sense.

    \begin{figure}[t]
		\centering
		\includegraphics{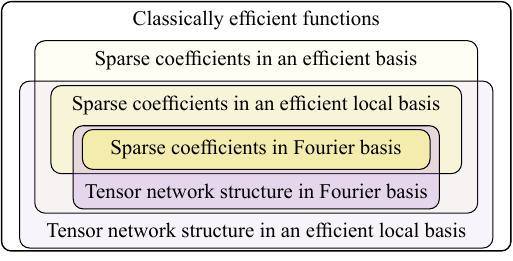}
		\caption{Hierarchy of efficient output function classes for PQCs. The nested structure illustrates the increasing generality of each class, denoted by varying color intensity.
        }    
		\label{fig:func_dec}
	\end{figure}
    
	In practice, a common choice for a local basis is the Fourier basis $T_k(x_j)=e^{ikx_j}$, for which many results exist in the harmonic analysis of PQCs~\cite{schuldEffectDataEncoding2021}; for binary inputs this simplifies to $T_k(x_j)=(-1)^{kx_j}$. 
    For simplicity, we refer to this specific local (product) Fourier basis as simply \enquote{Fourier basis}.
    Figure~\ref{fig:func_dec} summarizes visually the relations between the various cases. It organizes classically efficient functions into nested classes based on whether their efficiency stems from sparsity (few non-zero coefficients) or a compressible structure (an efficient tensor-network representation).
    Notice that the classical simulation techniques described in Section~\ref{ss:simulation} target one or more of the representation mechanisms just described. Tensor-network methods align naturally with tensor-network-structured families; stabilizer and Pauli-sparse methods align with low-rank expansions in the Fourier basis; Lie-algebraic or free-fermion methods correspond to other function spaces of low dimension. Indeed, this classification will allow us to directly connect common PQC ansätze and encoding strategies to known classical representations and classical simulability techniques in the next section.

\subsection{Circuit classification}\label{ss:classification}

	Building on the general classification framework established in Section~\ref{ss:classes}, we now analyze which specific circuit architectures realize each of the defined scenarios. In particular, we examine how combinations of the structural properties introduced in Section~\ref{ss:qml} give rise to the distinct function types discussed in the previous section, thereby falling into one of the classes defined in Section~\ref{ss:classes}. This analysis bridges the abstract and structural perspectives, extending the conceptual overview of Fig.~\ref{fig:func_dec} into a detailed landscape presented in Fig.~\ref{fig:full_picture}.
	
\subsubsection{Shallow depth circuits}\label{sss:shallow-depth}

    First let us consider the case in which some components of the circuit have a shallow, \ie logarithmic, depth. As locality now plays a role, here, tensor-network representations (as introduced in Section~\ref{sss:TN}) are the natural structure that may give rise to classically efficient elements of the model. 
    Notice, however, that such representations will be always efficient only if all components of the circuit are shallow. In these cases the circuits will be fully simulable in their dependence both on $x$ and $\vartheta$. If, on the other hand, only the encoding or only the trainable layers are shallow, then the full circuit will contract in general to an MPS representation with an exponentially large bond dimension. There is not any known way to exploit this partial shallowness. This leads to a first family of circuits which we can attribute to an abstract class:
    
    \begin{observation}[Shallow circuits]
        One-dimensional circuits of local gates, where the total depth of encoding and training layers is at most logarithmic in the system size, are in Class~\ref{class:1}. They can be fully simulated in terms of the training and encoding parameters using MPS methods.
    \end{observation}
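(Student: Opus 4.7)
The plan is to show that the state vector produced by the circuit at every intermediate time admits an MPS representation of bond dimension $\chi\in\poly(n)$, and from this to efficiently compute $\langle O\rangle_{x,\vartheta}$ as a function of both $x$ and $\vartheta$ jointly. This joint dependence is what will upgrade Property~\ref{pty:evaluatable} to Property~\ref{pty:identifiable} and thus place the circuit in Class~\ref{class:1}.

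First, I would observe that the initial state vector $\ket{\psi_0}=\ket{0}^{\otimes n}$ is a product state and so admits a trivial MPS description with bond dimension $\chi_0=1$. Then I would invoke the standard bound that applying a layer of nearest-neighbor two-qubit gates to a one-dimensional MPS at most doubles the bond dimension across each cut, so that after $d$ such layers one has $\chi\leq 2^{d}$. By hypothesis, the \emph{combined} depth of encoding and trainable blocks satisfies $d=\calO(\log n)$, yielding $\chi\leq 2^{\calO(\log n)}\in\poly(n)$. The crucial point to stress here is that this bound applies uniformly in the numerical values of $x$ and $\vartheta$: at each gate application we simply substitute the current parameter values into the gate matrix before performing the local tensor update.

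Next, I would recall that a local observable of the kind considered in Section~\ref{ss:qml} (e.g., a sum of Pauli operators each acting on $\calO(1)$ qubits) admits an MPO representation with bond dimension $\calO(1)$, and that the contraction $\langle\psi|O|\psi\rangle$ of an MPS of bond dimension $\chi$ with such an MPO can be computed in time polynomial in $n$ and $\chi$. Chaining these facts gives an explicit classical algorithm that, taking $(x,\vartheta)$ as input, outputs $f_\vartheta(x)$ in polynomial time (to additive precision $\epsilon$ in time $\poly(n,1/\epsilon)$, which is the notion of efficient simulation adopted in Section~\ref{ss:simulation}). This establishes Property~\ref{pty:identifiable} and therefore also Property~\ref{pty:evaluatable}, placing the circuit in Class~\ref{class:1}.

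The step that merits the most care, and which I regard as the main conceptual obstacle, is making explicit why the hypothesis must constrain the \emph{total} depth rather than the depth of each type of block separately. If only the encoding layers were shallow, the state after encoding would have an efficient MPS description, but the subsequent deep trainable evolution could still generate entanglement with bond dimension scaling as $2^{\poly(n)}$, in which case neither evaluation nor identification could be carried out with the MPS machinery sketched above. I would therefore include a short remark pointing out that this limitation is not a defect of the proof strategy but a genuine feature of the setting, motivating the separate treatment of non-shallow architectures in the observations that follow.
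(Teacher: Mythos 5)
Your proof is correct, but it takes a genuinely different route from the paper's. You argue in the Schr\"odinger picture: start from the product state, apply the gates one layer at a time with the numerical values of $(x,\vartheta)$ substituted in, track the bond-dimension growth, and contract against a constant-bond-dimension MPO for the observable. (One small quibble: a layer of generic nearest-neighbor two-qubit gates can multiply the bond dimension across a cut by up to $4$, not $2$, since the operator Schmidt rank of a two-qubit unitary is at most $4$; this changes nothing asymptotically, as $4^{\calO(\log n)}$ is still $\poly(n)$.) This directly yields an efficient algorithm for the joint map $(x,\vartheta)\mapsto f_\vartheta(x)$ and hence Class~\ref{class:1} membership, which is all the observation literally asserts, and it does so without any assumption on the form of the encoding gates. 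The paper's argument (Section~\ref{sss:shallow-depth} and Appendix~\ref{a:shallow}) instead works in a function-space picture: it expands the initial state, observable, encoding layers, and entangling layers in the Pauli basis, assumes encoding layers of the form $\bigotimes_j\exp(i\frac{\pi}{2}x_jZ_j)$, and contracts the resulting MPS/MPO objects to exhibit $f_\vartheta(x)$ explicitly in the form of Eq.~\eqref{eq:MPS-function} --- an MPS-structured coefficient tensor $C_\alpha(\vartheta)$ of polynomial bond dimension over the fixed local basis $\{1,\cos(\pi x),\sin(\pi x)\}$. What the paper's heavier construction buys is precisely this explicit functional representation, which is what places shallow circuits inside the tensor-network-compressible region of Figs.~\ref{fig:func_dec} and~\ref{fig:full_picture} and feeds into the later ERM discussion; your simpler argument establishes simulability but does not by itself identify the function basis or the structure of the coefficient array. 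Your closing remark on why the \emph{total} depth must be constrained matches the paper's own discussion of partial shallowness.
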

    
    As elements of Class~\ref{class:1}, shallow circuits give rise to outputs within a set of classically efficiently evaluatable functions. Indeed, using a construction similar to the one introduced in Ref.~\cite{shinDequantizingQuantumMachine2023} and making some further mild assumptions on the structure of the encoding layers, one can show that the functions $f_\vartheta$ produced by these shallow circuits can themselves be expressed in terms of an efficient MPS structure with respect to a local functional basis, as in Eq.~\eqref{eq:MPS-function}. The coefficients of this representation can furthermore be efficiently identified in terms of the parameters $\vartheta$.
    
    Consider, for instance, data encoding layers composed of a series of single qubit $Z$-rotation gates $S(x)=\bigotimes_j \, \exp(i\frac{\pi}{2} x_j Z_j)$, each acting on a different qubit and depending on an independent data element $x_j$. There may be several such layers separated by other arbitrary entangling gates, provided the total depth of the circuit remains logarithmic. Then one can show that the outcome of measuring a Pauli observable on such circuit can be expressed in the form~\eqref{eq:MPS-function} with respect to the local basis \cite{MANDy,QuantumInspired,B01Goessmann,shin2025newperspectivesquantumkernels}
    \begin{align}
        \{T_k(x)\}_k &
        \coloneqq
        \left\{ T_0(x), \, T_1(x), \, T_2(x) \right\}
        \nonumber
        \\
        &= \left\{ 1, \cos(\pi x), \sin(\pi x) \right\} \,.\label{eq:Tx}
    \end{align}
    The MPS coefficient in this basis can be expressed as a contraction of suitable tensor representations in the Pauli basis of the initial state, observable and the intermediate entangling layers, all of which may depend on $\vartheta$. The shallow depth of the circuit ensures that this contraction is efficient and that the resulting MPS has a polynomial bond dimension.
    See Appendix~\ref{a:shallow} for a more detailed construction. 
        
    \begin{figure*}[t]
		\centering
		\includegraphics{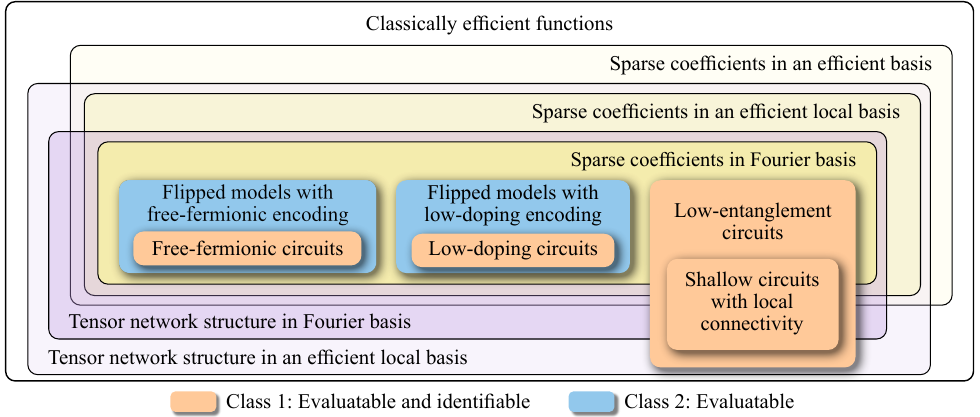}
		\caption{Classification of specific PQC architectures within the functional landscape. Circuit types are mapped onto the background of output functions established in Fig.~\ref{fig:func_dec}, with color coding corresponding to the distinct classes defined in Section~\ref{ss:classes}.
        Note that circuits in Class~\ref{class:3} give rise to classically-inefficient functions, and thus do not fit this Venn diagram.
        }
	\label{fig:full_picture}
	\end{figure*}

\subsubsection{Low doping circuits} \label{sss:low-doping}

    Let us now consider quantum circuits where some components of the circuit have a low $T$-gate count, or more generally can be represented in a sparse way when acting on the Pauli basis. As we have seen in Sections~\ref{sss:StabStates} and~\ref{sss:PauliProp}, this may make the circuits efficiently treatable using classical methods based on stabilizer states or Pauli back-propagation. Certainly, if all encoding and training layers are of this type, then the circuit as a whole can be simulated classically giving rise to classical output functions.
    
    \begin{observation}[Low doping circuits]
        Circuits, where each encoding and training layer, as well as their combined total action, maps (in Heisenberg picture) a Pauli observable to a linear combination of at most $\poly(n)$ Pauli strings, are in Class~\ref{class:1}. They can be fully simulated in terms of the training and encoding parameters using Pauli back-propagation methods.
    \end{observation}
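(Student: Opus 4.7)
The plan is to establish this observation by exhibiting an explicit classical algorithm that implements Pauli back-propagation and computes the joint map $(x,\vartheta)\mapsto f_\vartheta(x)$ in time $\poly(n)$. Such an algorithm certifies Property~\ref{pty:identifiable} directly, and Property~\ref{pty:evaluatable} as an immediate consequence, which together place the circuit family in Class~\ref{class:1}.

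First, I would rewrite the hypothesis in the Heisenberg picture as $f_\vartheta(x) = \bra{\psi_0}O_H(x,\vartheta)\ket{\psi_0}$, where $O_H(x,\vartheta) = U^\dagger(x,\vartheta)\,O\,U(x,\vartheta)$, and expand the fixed observable $O$ in the Pauli basis. Under the standard assumption that $O$ is a local Pauli operator (or a sum of $\poly(n)$ Pauli strings), the starting representation is sparse. I would then propagate this representation backwards through each of the $\poly(n)$ layers in turn, maintaining an explicit list of (Pauli string, scalar coefficient) pairs. Each single-layer Heisenberg update is the conjugation $P \mapsto L^\dagger P L$ for $L$ an encoding or trainable block, which by hypothesis produces at most $\poly(n)$ new Pauli strings per input string. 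The coefficients appearing along the way are analytic (trigonometric/polynomial) functions of the parameters $x$ and $\vartheta$, so they can be tracked symbolically or numerically with standard precision. Finally, once all layers have been consumed, the expectation value decomposes as $\sum_P c_P(x,\vartheta)\,\bra{\psi_0}P\ket{\psi_0}$, which is efficiently computable for the standard initial states such as $\ket{0}^{\otimes n}$, any stabilizer state, or a $\poly(n)$-bond-dimension MPS.

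The main obstacle I would expect is controlling the intermediate size of the Pauli expansion. Even though the per-layer blow-up is bounded by $\poly(n)$, naive layer-by-layer iteration could in principle produce a quasi-polynomial cascade after many layers, which would be incompatible with overall polynomial runtime. The role of the assumption on the \emph{combined total action} is precisely to rule this out: it ensures that the back-propagated observable admits a $\poly(n)$-sized Pauli expansion at every intermediate stage. To make this rigorous I would either interpret (or slightly strengthen) the statement so that the polynomial sparsity applies to every prefix/suffix of the circuit (which appears to be the intended meaning), or augment the iteration with a merging/pruning step that uses the combined-action bound as a certificate for discarding strings that would cancel downstream. In either case, the analysis parallels the pruning arguments of Refs.~\cite{rudolph2023classical,rudolph2025paulipropagation}, but without needing an accuracy analysis since the observation assumes exact sparsity. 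Once this bookkeeping is in place, combining the three steps yields a single uniform classical polynomial-time algorithm for the joint map and completes the proof.
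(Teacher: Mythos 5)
Your proposal is correct and follows essentially the same route as the paper: expand the observable in the Pauli basis, back-propagate layer by layer using the assumed $\poly(n)$ sparsity, and evaluate the resulting $\poly(n)$-term sum (a linear combination of Fourier-type basis functions of $x$ with coefficients depending on $\vartheta$) against the initial state, which is exactly the contraction of sparse Pauli-basis transition matrices carried out in Appendix~\ref{a:low-doping}. Your explicit treatment of the intermediate blow-up issue --- reading the combined-total-action hypothesis as guaranteeing sparsity at every prefix of the circuit --- is, if anything, more careful than the paper's own parenthetical remark that the assumptions force at most logarithmically many doping layers.
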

    
    In what follows we will refer to circuits (or parts of circuits) with this property as \emph{low doping} circuits. Notice, however, that this characterization applies to a range of different circuit types. The simplest case is if the circuit in total only contains a number of single qubit non-Clifford gates that is at most logarithmic in the system size. However, also many types of noise lead to a total action of the noisy circuit which can be approximated to a good precision by a sparse outcome on the Pauli basis~\cite{fontana2025classical, martinez2025efficientsimulationparametrizedquantum,rudolph2025paulipropagation}. The same is true for average case circuits sampled from certain ensembles of random noiseless circuits~\cite{rudolph2023classical}. All such circuits fall into this case and give rise to the same type of functions $f_\vartheta$. In this setting, one can also use other stabilizer-related methods~\cite{masot2024stabilizer,PRXQuantum.6.010345, bravyiSimulationQuantumCircuits2019c}.
    
    In general, as the action of the circuit maps the observable to a sum of polynomially many terms, also the outcome $f_\vartheta(x)$ of the circuit will be written as a sum of the polynomially many expectation values of these terms on the initial state. The outcome will, therefore, be an efficient function of the form of Eq.~\eqref{eq:function-sparse} with $m\in\poly(n)$. The precise form of the individual elements of this functional basis will depend on details about the encoding gates, but by assumption it will be efficiently derived from the circuit description as part of the Pauli back-propagation protocol. A reasonable assumption for the encoding layers is again $S(x)=\bigotimes_k \, \exp(i\frac{\pi}{2} x_j Z_j)$, where we now restrict to binary data $x_j\in\{0,1\}$ (otherwise each encoding gate is in general non-Clifford, taking us immediately out of the low-doping scenario). In this case one finds that $f_\vartheta(x)$ is a linear combination of polynomially many terms in the Fourier basis $T_k(x)=\prod_j(-1)^{k_j x_j}$ (see Appendix~\ref{a:low-doping}). 
    
\subsubsection{Partially low doping circuits}\label{sss:part-low-doping}
    
    As before, we may consider what happens if only some parts of the circuit have the low doping property. We argue that the only case in which this can be somehow exploited is if this low-doping component is located at the end of the circuit, just before the final observable. Indeed, if this ``efficient'' component is followed by others with more generic properties, then neither a back-propagated observable nor a forward-propagated state can have a sparse representation on the Pauli basis (or stabilizer state set) which could be exploited.
    
    On the other hand, if the last component of the circuit has a low doping effect, then this can be used to say something about the generic structure of the circuit's outcome. Here we distinguish the case of \emph{flipped} or \emph{encoding-first} architectures, \ie whether the last component is represented by the data-encoding or trainable part of the circuit. 
    
    In the case of flipped architecture circuits, we consider the back-propagated action of the last part of the circuit (\ie the data-encoding block) on the final Pauli observable. If, as we are here assuming, this block has the low doping property introduced above, then by definition the back-propagated observable will have the form 
    \begin{align}
        O(x)
        \coloneqq
        E^\dagger (x) O E(x) = \sum_{k=1}^{\poly(n)} T_k(x) P_k\,,
    \end{align}
    where $P_k$ are Pauli operators. Here, $T_k$ are coefficients whose value depends on the data through the encoding gates in $E(x)$. The specific functional form of $T_k$ will depend on the properties of the encoding gates, but for common choices the coefficients will be efficiently computable classically in terms of $x$ (for a typical choice of encoding gates they will be elements of the discrete Fourier functional basis, as discussed in more detail in Appendix~\ref{a:low-doping}). The precise set of Pauli operators $P_k$ appearing in the sum will depend on the architecture of the encoding circuit, but by construction we assume that it will be a fixed set of polynomially many operators.
    
    If we now consider the value of the final expectation value of the circuit we find
    \begin{align}
        f_\vartheta(x)&=\mathrm{Tr} \rho(\vartheta) O(x) \\
        \nonumber
        &= \sum_{k=1}^{\poly(n)} \mathrm{Tr} \left[\rho(\vartheta)P_k\right] \; T_k(x) \\
        \nonumber
        &= \sum_{k=1}^{\poly(n)} C_k(\vartheta) \; T_k(x) \,,
    \end{align}
    where $\rho(\vartheta)=W(\vartheta)\rho_0 W^\dagger(\vartheta)$.
    Recall that $W(\vartheta)$ denotes the trainable unitary, which is applied initially in such flipped architectures.
    We thus see immediately that $f_\vartheta$ is a classically efficiently evaluatable function according to Property~\ref{pty:evaluatable}, as it is a linear combination of $\poly(n)$ efficient functions. However, it is not necessarily efficiently identifiable (Property~\ref{pty:identifiable}), as the coefficients $C_k(\vartheta)=\mathrm{Tr} \left[\rho(\vartheta)P_k\right]$ cannot be efficiently computed without making further assumptions on the trainable blocks $W(\vartheta)$. 
    
    \begin{observation}[Partially low doping flipped circuits]
        Circuits with a flipped architecture, where the encoding block maps (in Heisenberg picture) a Pauli observable to a linear combination of at most $\poly(n)$ Pauli strings, are in Class~\ref{class:2}. Their outcomes are guaranteed to be efficiently evaluatable functions of $x$, but not necessarily identifiable.
    \end{observation}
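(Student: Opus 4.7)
The plan is to directly exploit the decomposition that the flipped architecture with a low-doping encoding block naturally provides, showing evaluatability constructively while arguing non-identifiability via a standard complexity-theoretic obstruction. First, I would fix notation: let $U(x,\vartheta)=E(x)W(\vartheta)$, so that the output is $f_\vartheta(x)=\Tr[\rho(\vartheta) O(x)]$ with $\rho(\vartheta)=W(\vartheta)\rho_0 W^\dagger(\vartheta)$ and $O(x)=E^\dagger(x) O E(x)$. The low-doping hypothesis on $E$ then gives, by definition, a decomposition
\begin{equation}
O(x)=\sum_{k=1}^{m} T_k(x)\, P_k,\qquad m\in\poly(n),
\end{equation}
where the $P_k$ are Pauli strings fixed by the architecture of $E$ and the $T_k$ are coefficient functions computable from the structure of $E$.

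To establish Property~\ref{pty:evaluatable} (efficient evaluation), I would argue as follows. Fix any $\vartheta\in\Theta$. The non-uniform advice associated with this choice of parameters is the list of $m$ real numbers $C_k(\vartheta)=\Tr[\rho(\vartheta)P_k]$ together with a classical description of the Pauli strings $P_k$ and of the functions $T_k$. These can be obtained by any means (for instance, by physically running the quantum circuit with input parameters $\vartheta$ and estimating the Pauli expectation values, or via exponential-time classical precomputation), since Property~\ref{pty:evaluatable} only asks for the existence of a polynomial-time classical algorithm after $\vartheta$ has been fixed. Given this advice, the evaluation map $x\mapsto f_\vartheta(x)=\sum_k C_k(\vartheta) T_k(x)$ is a sum of $\poly(n)$ efficiently computable terms, and hence runs in polynomial time.

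For the second part of the observation, I would emphasize that Property~\ref{pty:identifiable} is strictly stronger, as it requires a single algorithm that takes both $x$ and $\vartheta$ as input. The natural bottleneck is the computation of $C_k(\vartheta)=\Tr[\rho(\vartheta) P_k]$ from $\vartheta$, which in general is a generic Pauli expectation value on the output of an unrestricted trainable circuit $W(\vartheta)$. Without further structural assumptions on $W(\vartheta)$ (e.g., shallowness, low magic, or a tractable Lie-algebraic structure as discussed in Section~\ref{ss:simulation}), this is a $\BQP$-complete task and so cannot be performed in classical polynomial time under standard complexity assumptions. I would spell this out as a one-line argument: any unrestricted polynomial-depth unitary $W(\vartheta)$ can be realized, so if $C_k(\vartheta)$ were identifiable we would have $\BQP=\BPP$.

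The step I would treat most carefully is the argument that the $T_k$ and $P_k$ in the decomposition of $O(x)$ are indeed classically efficient to handle. This is not wholly automatic from the bare hypothesis that $E^\dagger(\,\cdot\,)OE(\,\cdot\,)$ has Pauli rank at most $\poly(n)$; one must also know that Pauli back-propagation through $E(x)$ can track the $x$-dependent coefficients $T_k$ in closed form. I would resolve this by pointing the reader to the explicit construction in Appendix~\ref{a:low-doping}, where for the standard choice $E(x)=\bigotimes_j \exp(i\tfrac{\pi}{2}x_j Z_j)$ with binary inputs the $T_k$ turn out to be elements of the Fourier basis, and more generally by invoking the same Pauli-propagation machinery reviewed in Section~\ref{sss:PauliProp}. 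Apart from this bookkeeping, the rest of the proof is essentially a restatement of the derivation already given in the text preceding the observation.
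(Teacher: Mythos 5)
Your proposal is correct and follows essentially the same route as the paper: back-propagate the Pauli observable through the low-doping encoding block to get $O(x)=\sum_{k=1}^{\poly(n)}T_k(x)P_k$, read off $f_\vartheta(x)=\sum_k C_k(\vartheta)T_k(x)$ with $C_k(\vartheta)=\Tr[\rho(\vartheta)P_k]$, conclude evaluatability from the polynomial-sized sum of efficient basis functions, and observe that identifiability would require classically computing generic Pauli expectation values of an unrestricted $W(\vartheta)$. Your extra care about treating the $C_k(\vartheta)$ as non-uniform advice and about the efficiency of the $T_k$ (deferring to the Appendix~\ref{a:low-doping} construction) matches the paper's own caveats.
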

    
    Note that in the discussion above we assumed that all the data encoding happens in a circuit block $E(x)$. The same observations would apply also in the slightly more general architecture where one is allowed to measure a generic data-dependent observable $O(x)=\sum_k^{\poly(n)}h_k(x) P_k$, with the assumption that the observable contains a sum of at most polynomially many Pauli terms. 
    
    Finally, the same reasoning can be repeated for encoding-first circuits where the last component of the circuit (\ie the trainable block) has the low doping property. This would give rise to the reversed situation where $f_\vartheta(x)= \sum_{k=1}^{\poly(n)} C_k(x) \; T_k(\vartheta)$. Notice again that with these assumptions $C_k(x)$ are in general not efficient functions, so $f_\vartheta(x)$ of this form does not have the efficient structures of Class~\ref{class:1}, nor of Class~\ref{class:2}.
    
\subsubsection{Free fermionic circuits}\label{sss:free-fermion}

    In the case of polynomial-dimensional dynamical Lie algebras, the analysis is similar to the low doping and the partially low doping cases. This is also the case when the combinations mentioned in Sec.~\label{sss:combinations_simulation} are handled in a setting that ensures some simulability. Regardless, we focus here on the free fermionic case for clarity. If the final component of the circuit is a free fermionic evolution and the observable is a low-degree monomial of Majorana operators, then this can be exploited to find some classically efficient structures in the output $f_\vartheta$.  Indeed, this implies that $f_\vartheta(x)= \sum_{k=1}^{\poly(n)} C_k(x) \; T_k(\vartheta)$ (for encoding-first circuits) or $f_\vartheta(x)= \sum_{k=1}^{\poly(n)} C_k(\vartheta) \; T_k(x)$ (for flipped circuits). As before, this gives rise to evaluatable but not necessarily identifiable functions in the flipped case.
    If also the rest of the circuit is a free fermionic evolution then the whole circuit can be simulated classically.
    
    \begin{observation}[(Partially) free-fermionic circuits]
        Circuits where the total action of encoding and training layers is a free-fermionic evolution and the observable is a low-degree monomial of Majorana operators are in Class~\ref{class:1}. Circuits with a flipped architecture, where the encoding block is a free fermionic evolution, are in Class~\ref{class:2}.
    \end{observation}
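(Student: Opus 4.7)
The plan is to handle the two claims separately, mirroring the template already used for the partially low-doping case, but replacing Pauli-string sparsity with the representation-theoretic structure of free-fermionic evolution reviewed in Section~\ref{sss:DLA}.

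For the fully free-fermionic claim (Class~\ref{class:1}), the key fact is that the adjoint action of a free-fermionic unitary $U(V)$ sends a degree-$k$ Majorana monomial $\gamma_{i_1}\cdots\gamma_{i_k}$ to a linear combination of degree-$k$ monomials, with coefficients given by the $k\times k$ minors of $V\in\mathrm{SO}(2n)$. The span of such monomials has dimension $\binom{2n}{k}\in\poly(n)$ for constant $k$, so one can store the Heisenberg-evolved observable throughout the circuit as a coordinate vector in this $\poly(n)$-dimensional irreducible subspace and update it layer-by-layer in polynomial time from both $x$ and $\vartheta$. Assuming the initial state is Gaussian (or more generally that its expectation values on Majorana monomials of degree $\le k$ are efficiently computable via Pfaffians), the final expectation value $f_\vartheta(x)$ is obtained by contracting the evolved vector against these Wick/Pfaffian values, giving an explicit poly-time algorithm realizing Properties~\ref{pty:evaluatable} and~\ref{pty:identifiable}.

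For the flipped architecture (Class~\ref{class:2}), I would import verbatim the argument of Section~\ref{sss:part-low-doping}, only substituting \emph{Majorana monomial basis} for \emph{Pauli basis}. Back-propagating the low-degree Majorana observable $O$ through the free-fermionic encoding gives $E^\dagger(x)\,O\,E(x)=\sum_{k=1}^{\poly(n)} T_k(x)\,P_k$, where $\{P_k\}$ is the fixed polynomial-sized basis of degree-$k$ Majorana monomials and the $T_k(x)$ are minors of the orthogonal matrix $V(x)$ parametrizing the encoding, hence efficiently classically computable in $x$. Writing $\rho(\vartheta)=W(\vartheta)\ket{\psi_0}\!\bra{\psi_0}W^\dagger(\vartheta)$, we obtain
\begin{align}
f_\vartheta(x)=\sum_{k=1}^{\poly(n)} C_k(\vartheta)\,T_k(x), \qquad C_k(\vartheta)=\Tr[\rho(\vartheta)\,P_k].
\end{align}
Because the $T_k(x)$ are efficient and the $C_k(\vartheta)$ are a fixed polynomial-size table of numbers once $\vartheta$ is fixed, the resulting function class satisfies Property~\ref{pty:evaluatable}. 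Since $W(\vartheta)$ is an arbitrary trainable unitary, there is no general poly-time way to compute the $C_k(\vartheta)$ from $\vartheta$ alone, so Property~\ref{pty:identifiable} is not guaranteed, placing these circuits in Class~\ref{class:2} but not Class~\ref{class:1}.

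The main obstacle I expect is pinning down the admissible initial states and observables in the Class~\ref{class:1} part. To invoke $\mathfrak{g}$-sim one really needs both the observable and the initial state to be handled within the same polynomial-dimensional irreducible representation: arbitrary initial state vectors $\ket{\psi_0}$ would break the efficiency, and the mild but necessary assumption of Gaussian (or matchgate-preparable) initial states should be made explicit in the statement. A secondary, essentially bookkeeping, step is to verify that for the typical choice of data-dependent generators (e.g. encoding gates of the form $\exp(i x_j\,\gamma_a\gamma_b)$), the coefficient functions $T_k(x)$ are of the same ``efficient functional basis'' type used throughout Section~\ref{ss:functions}, so that the overall surrogate uniformly satisfies Property~\ref{pty:evaluatable}.
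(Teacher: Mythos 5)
Your proposal is correct and follows essentially the same route as the paper: the Class~\ref{class:1} part is the $\mathfrak{g}$-sim argument of Section~\ref{sss:DLA} specialized to the free-fermionic irreducible subspace of degree-$k$ Majorana monomials, and the Class~\ref{class:2} part reuses the back-propagation template of Section~\ref{sss:part-low-doping} verbatim with the Majorana monomial basis in place of Pauli strings, exactly as the paper does. The initial-state caveat you flag is already handled by the paper's standing assumption $\lvert\psi_0\rangle=\ket{0}^{\otimes n}$ (a fermionic Gaussian state) together with the explicit requirement in Section~\ref{sss:DLA} that the initial state admit efficient expectation values on the invariant subspace.
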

    
    As in the previous cases, the form of the functional basis $T_k(x)$ depends on the specific choice of encoding gates. The choice $S(x)=\bigotimes_k \, \exp(i\frac{\pi}{2} x_j Z_j)$ is again a reasonable and common one, as it is fully composed of free fermionic gates. With this choice it follows that $T_k$ is the trigonometric basis discussed in Appendix~\ref{a:shallow}.
    
    Finally note that there are techniques~\cite{miller2025simulationfermionic} that parallel Pauli back-propagation methods for fermions. Similarly to the Pauli case, this implies that there may exist circuits that are to a good approximation free-fermionic due to the presence of noise or of randomness in the parameter distribution.
   
\subsubsection{General circuits}

    We conclude by observing that all other combinations of the quantum circuit structures defined in Section~\ref{ss:qml} produce circuits that satisfy neither Property~\ref{pty:evaluatable} nor Property~\ref{pty:identifiable}.
    They are, therefore, in Class~\ref{class:3}. This includes in particular any circuit architecture which is only partially shallow and only partially low doping (with the exception of the \emph{partially low doping} and \emph{partially free-fermionic} flipped circuits discussed above).
    
    Note that this analysis is based on using the currently known circuit simulation approaches and corresponding functional bases. One cannot rule out that novel simulation techniques may be discovered which provide a systematic way to efficiently represent some circuit architectures in a sparse form on some basis not considered here. Forecasting how these might look like is beyond the scope of this work. More generally, one may consider the problem of finding low-rank representations of an arbitrary state vector or operator~\cite{li2019tutorialcomplexityanalysissingular,Eckart_Young_1936,Sidiropoulos_2024,10.1145/2512329,1384521}. This is, however,  inefficient in the number of qubits and will in any case not necessarily correspond to an efficient low-rank basis for the whole hypothesis function family corresponding to a given PQC. We therefore only focus on what can be achieved provably efficiently with known methods.
    
    Lastly, it is important to notice that there definitely exist circuit architectures where for some specific choices of parameters $\vartheta$ the outcome function $f_\vartheta$ is a classically efficient function. There may even exist fine-tuned architectures where this is true for a range of values of $\vartheta$. For example, rotations with very small angles are known to introduce little non-stabilizerness to the circuit, which would make the circuit simulable~\cite{Bravyi_2019} despite our considerations of structure. Despite this, the argument that we want to make here is that we cannot think of further \emph{generic} circuit architecture that give rise to Class~\ref{class:1} or Class~\ref{class:2} circuits, other than the ones discussed above. Indeed, we find that only an optimization over a sufficiently generic family of circuits is truly in the spirit of ML. Such a large family will contain edge-cases that become classically easy, yet they will not be symptomatic of any explotable behaviours of the rest.
   
\section{Implications on quantum advantage in learning}\label{s:implications}

    A learner is considered to have solved a learning task if it achieves a low expected risk (test error).
    However, the expected risk cannot be minimized directly, since the underlying data distribution is unknown.
    Therefore, a necessary first step toward solving a learning task is minimizing the empirical risk (training error).
    In this section, we first discuss the potential for quantum advantage in \emph{empirical risk minimization} (ERM) within the framework established in Section~\ref{ss:classes}.
    Next, we outline scenarios that may lead to quantum advantage in a broader learning context.
    
\subsection{Empirical risk minimization}\label{ss:imp_ERM}

    To ensure a rigorous comparison, we consider both quantum and classical algorithms for ERM with respect to specific hypothesis families.
    We distinguish scenarios where the quantum learner stems from a circuit family from each of the classes introduced above.
    The circuit family specifies a function basis $(T_k)_k$, which in turn spans a space of linear combinations $f(x)=\sum_k C_k T(x)$.
    We consider ERM over the function family corresponding to these linear combinations.
    Accordingly, for an efficient classical learner to exist, the basis functions $T_k$ must be efficiently evaluatable on a classical computer, and the coefficients $C_k$ must have an efficient structure (\eg sparse or low bond-dimension representations).
    
    The analysis for Class~\ref{class:1} is straightforward.
    Functions arising from circuits in this class are both efficiently evaluatable and identifiable classically.
    Assuming the training algorithm $\calA$ used by the quantum learner is also classically efficient, no quantum advantage for ERM is possible.
    Specifically, for any such quantum learning algorithm, there exists an efficient classical algorithm that accurately replicates every step, reproducing approximately the same output, and hence also its performance. This yields a complete dequantization.

    Class~\ref{class:2} presents a more nuanced case. Here, the functions are classically efficiently evaluatable, but their dependence on the parameters is not necessarily efficiently identifiable.
    We recall that, among the PQC architectures that we classified in Section~\ref{ss:qml}, this class arises in flipped models, \ie the examples presented in Sections~\ref{sss:part-low-doping} and~\ref{sss:free-fermion}. In these cases, the function family defined by the circuit is
    \begin{align}\label{eq:flipped_f}
    	\calF&=\left\{\left.f_\vartheta(x)= \sum_{k=1}^{\poly(n)} \mathrm{Tr} \left[\rho(\vartheta)P_k\right] T_k(x) \,\right|\,\vartheta\in\Theta\right\},
    \end{align}    
    where $T_k$ are classically 
    efficient 
    basis functions 
    and $\rho(\vartheta)=W(\vartheta)\rho_0 W^\dagger(\vartheta)$ is an arbitrarily parametrized state.
    This hypothesis family is a subset of the larger function family: 
    \begin{align}
    	\mathcal{F}'=\left\{\left. f_C(x)= \sum_{k=1}^{\poly(n)} C_k \; T_k(x) \;\right|\; C_k\in\mathbb{R}\right\}\,, \label{eq:F'-convex}
    \end{align} 
    whose functions are also classically efficiently evaluatable.
    To establish a natural comparison, we consider ERM on the larger set $\calF'$ for the classical learner, while the quantum learner is restricted to $\calF$ via the parametrization $\vartheta$.
    The inclusion $\calF\subseteq\calF'$ is clear, although characterizing the exact boundary of $\calF$ is generally difficult~\cite{schuldEffectDataEncoding2021,sweke2025kernel}. This means that the optimal classical solution -- if found -- will achieve an empirical risk that is at least as good as the one of the optimal quantum solution.
    
    Strictly speaking, the quantum and classical learners address slightly different optimization tasks.
    Intuitively, one might expect the classical task to be harder: the classical algorithm optimizes over a larger set $\calF'$  (potentially achieving better performance) and finding the optimal solution could be more difficult due to the increased search space.
    We specialize ERM to the \emph{mean square error} loss $\ell(f(x),y) = (f(x)-y)^2/2$.
    Given a training set $S=\{(x_i,y_i)\}_{i=1}^N$, the empirical risk is $\hR_S[f]=\sum_{i=1}^N(f(x_i)-y_i)^2/(2N)$.
    In this context, the intuition that optimizing over a larger set is harder proves incorrect.
    Indeed, the following proposition shows that a classical learner can exploit the convexity of $\calF'$ to efficiently find the globally-optimal solution.
    
    \begin{proposition}[Informal]\label{prop:quadratic-opt}
        The ERM task with respect to the larger hypothesis family $\calF'$ can be solved efficiently on a classical computer.
    \end{proposition}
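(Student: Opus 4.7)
The plan is to recognize that the family $\calF'$ in Eq.~\eqref{eq:F'-convex} is a finite-dimensional linear function space, parametrized linearly by the coefficient vector $C=(C_1,\dots,C_m)\in\mathbb{R}^m$ with $m\in\poly(n)$, and that the mean square error empirical risk is therefore a convex quadratic in $C$. This reduces ERM over $\calF'$ to a standard linear least squares problem, which is efficiently solvable on a classical computer.

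Concretely, I would first introduce the design matrix $\Phi\in\mathbb{R}^{N\times m}$ defined by $\Phi_{ik}=T_k(x_i)$ and the label vector $y=(y_1,\dots,y_N)^\top$. Since each basis function $T_k$ is by assumption classically efficiently evaluatable and both $m$ and $N$ are polynomial in $n$, $\Phi$ can be assembled in time $\poly(n)$. The empirical risk then reads
\begin{align}
\hR_S[f_C]=\frac{1}{2N}\sum_{i=1}^N\left(\sum_{k=1}^m C_k T_k(x_i)-y_i\right)^2=\frac{1}{2N}\norm{\Phi C-y}_2^2,
\end{align}
which is a convex quadratic function of $C$.

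Next, I would invoke the standard fact that any global minimizer $C^\ast$ satisfies the normal equations $\Phi^\top\Phi\, C^\ast=\Phi^\top y$, so $C^\ast$ may be obtained via the Moore--Penrose pseudoinverse as $C^\ast=\Phi^{+}y$ (avoiding assumptions on the rank of $\Phi$). Computing this solution reduces to standard polynomial-time linear algebra, \eg an SVD of $\Phi$, which runs in time $\poly(N,m)=\poly(n)$. Finally, I would substitute $C^\ast$ back to obtain an explicit hypothesis $f_{C^\ast}\in\calF'$ whose empirical risk is by construction globally minimal on $\calF'$; since $f_{C^\ast}$ is specified by the polynomially many coefficients $C^\ast_k$ and the basis $(T_k)_k$ is classically efficiently evaluatable, $f_{C^\ast}$ is itself classically efficiently evaluatable on any new input $x$.

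I do not anticipate a genuine technical obstacle here: the heart of the statement is just the well-known tractability of convex quadratic programs, together with the fact that Class~\ref{class:2} circuits produce functions in a linear span of $\poly(n)$ efficiently evaluatable basis functions. The only subtlety worth mentioning is numerical conditioning of $\Phi^\top\Phi$, which can be handled in standard fashion by using the pseudoinverse or a Tikhonov-regularized variant, with polynomial overhead in the required precision. This allows the informal proposition to be made fully rigorous without altering the conclusion.
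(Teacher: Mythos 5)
Your proof is correct and follows essentially the same route as the paper: both reduce ERM over $\calF'$ to minimizing a convex quadratic in the $\poly(n)$ coefficients $C_k$ (your $\norm{\Phi C-y}_2^2$ is exactly the paper's quadratic form with $M=\Phi^\top\Phi/N$ and $V=\Phi^\top y/N$) and solve the resulting linear system in polynomial time. The paper additionally sketches an equivalent kernel-ridge-regression route via the representer theorem, but that is an alternative rather than a different core argument.
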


    \begin{proof}[Proof sketch:]
    The optimization task can be cast as minimizing a convex quadratic form 
        	\begin{align}
    		\hR_S(f_C) &\propto \sum_{k,k'=1}^{m}  C_k C_{k'} \left(\sum_{i=1}^NT_k(x_i)T_{k'}(x_i)\right) \\
            &\hphantom{\propto} -2 \sum_{k=1}^{m} C_k \left(\sum_{i=1}^Ny_iT_k(x_i)\right) + \sum_{i=1}^Ny_i^2\,
            \nonumber
    	\end{align}
        with respect to the coefficients $C_k$.
    Since the number of basis functions $m$ is polynomial in $n$, the minimization 
    of this quadratic form corresponds to solving a system of linear equations, which can be done classically in polynomial time.

    Alternatively, one may invoke the \emph{representer theorem}~\cite{scholkopf2002learning} from kernel methods. This states that the optimal coefficients take the form $C_k = \sum_{i=1}^N \alpha_i T_k(x_i)$. The optimal coefficients $\alpha$ can be found via \emph{kernel ridge regression}~\cite{scholkopf2002learning}. Here, the kernel function is $K(x,x')=\sum_{k=1}^{\poly(n)} T_k(x)T_k(x')$. Because this involves a polynomial-sized sum of classically efficient functions, the kernel itself is classically efficient to compute. Further details are provided in Appendix~\ref{a:quadratic}.
    \end{proof}

    This observation extends kernel-based surrogates based on \emph{random Fourier features} \cite{landman2022classicallyapproximatingvariationalquantum, sweke2025potential, sahebi2025dequantizationsupervisedquantummachine, sweke2025kernel} in the natural language of our framework.
    In summary, ERM is classically easy whenever the hypothesis family consists of linear combinations of a polynomial-sized set of functions, which includes the only circuit examples in Class~\ref{class:2} that can be constructed with the circuit structure from Sec.~\ref{ss:qml}. Although one can come up with examples where ERM is classically hard, these require circumventing the precise structure of the circuit, falling outside the usual approaches of PQC. In particular, among the contributions of Refs.~\cite{gyurik2023exponential,Jerbi2024shadows} is a learning task where achieving a low empirical risk is expected to be hard for a classical learner, due to a reduction to the Discrete Cube Root (DCR) problem. This case differs from the previous ones in that it uses an efficiently evaluatable hypothesis family that is not of the convex form~\eqref{eq:F'-convex} associated to generic circuit architectures from Sec.~\ref{ss:qml}. This exemplifies a gap between our constructive PQC examples in Class~\ref{class:2} and its general members.
    
    Conversely, we show in the following proposition that ERM restricted to the parametrized quantum hypothesis family $\calF$ is computationally hard in the worst case.
    
    \begin{proposition}[Informal]\label{prop:qcma-complete}
        The ERM task with respect to the restricted hypothesis family $\calF$ cannot be solved efficiently by a quantum computer in the worst case.
    \end{proposition}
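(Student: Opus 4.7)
The plan is to establish the proposition by showing that ERM over the restricted family $\calF$ in Eq.~(\ref{eq:flipped_f}) is $\mathsf{QCMA}$-hard, so that (under the standard assumption $\BQP\neq\mathsf{QCMA}$) no quantum algorithm solves it efficiently in the worst case. This explains the label of Proposition~\ref{prop:qcma-complete} and is entirely consistent with Proposition~\ref{prop:quadratic-opt}: the classical learner there optimizes over the convex superset $\calF'$, whereas here the quantum learner is constrained to the highly non-convex submanifold of coefficient vectors $(\mathrm{Tr}[\rho(\vartheta)P_k])_k$ attainable by the parametrization $\vartheta$.

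First, I would argue $\mathsf{QCMA}$-containment. Given classical parameters $\vartheta$ as a witness, a quantum verifier prepares $\rho(\vartheta)=W(\vartheta)\rho_0 W^\dagger(\vartheta)$, measures each of the $\poly(n)$ Pauli operators $P_k$ to estimate $\mathrm{Tr}[\rho(\vartheta)P_k]$ to inverse-polynomial additive precision using polynomially many shots, combines these estimates with the classically computable values $T_k(x_i)$, labels $y_i$, and returns an estimate of $\hR_S(f_\vartheta)$. The verifier accepts iff this estimate lies below the target threshold. Since all building blocks have polynomial cost, this places the decision problem in $\mathsf{QCMA}$.

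For $\mathsf{QCMA}$-hardness I would reduce from a canonical hard problem: deciding whether $\min_\vartheta \mathrm{Tr}\!\left[W(\vartheta)\rho_0 W^\dagger(\vartheta)\,H\right]$ lies below a threshold $a$ versus above $b$, with $b-a=1/\poly(n)$, for a polynomial-support Pauli Hamiltonian $H=\sum_k h_k P_k$ and a parametrized circuit family $W(\vartheta)$ matching the trainable block of $\calF$. This problem is the natural variational analog of Local Hamiltonian restricted to classically-described ansatz states, which is known to be $\mathsf{QCMA}$-hard (see, e.g., the NP-hardness results of Ref.~\cite{bittel2021training} and their strengthenings). Given an instance, I construct an ERM instance whose minimizer decides it: pick a single training point $(x_0,y_0)$ with $T_k(x_0)$ proportional to $h_k$ (after suitable rescaling to absorb units), and $y_0$ sufficiently negative that the positive quantity $(f_\vartheta(x_0)-y_0)^2/2$ is monotonically decreasing in $f_\vartheta(x_0)=\mathrm{Tr}[\rho(\vartheta)H]$ over the relevant range. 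Minimizing $\hR_S$ thus solves the ground-energy decision problem.

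The main obstacle is realizing an arbitrary coefficient vector $(h_k)_k$ through the basis $\{T_k\}_k$ dictated by the encoding block. For restricted bases such as the binary Fourier basis, where $T_k(x_0)\in\{\pm1\}$, a single data point cannot hit arbitrary $h_k$. The natural workaround is to use $N\in\poly(n)$ training points $\{(x_i,y_i)\}$ and exploit the quadratic structure expanded in the sketch of Proposition~\ref{prop:quadratic-opt}: the cross-term $\sum_{k,k'} C_k(\vartheta)C_{k'}(\vartheta) G_{kk'}$ with Gram matrix $G_{kk'}=\sum_i T_k(x_i)T_{k'}(x_i)$, together with the linear term $\sum_k C_k(\vartheta)\sum_i y_i T_k(x_i)$, must be engineered by choice of $\{(x_i,y_i)\}$ to reproduce $\mathrm{Tr}[\rho(\vartheta)H]$ up to an additive constant and overall scale. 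A secondary but routine technicality is showing that the inverse-polynomial promise gap of the source problem translates into an inverse-polynomial gap on the ERM threshold; standard amplification of the Hamiltonian spectrum and rescaling of $y_0$ handle this cleanly.
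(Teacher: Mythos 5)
Your proposal follows essentially the same route as the paper: $\mathsf{QCMA}$-containment via a classical witness $\vartheta$ verified by preparing $\rho(\vartheta)$ and estimating the empirical risk, and $\mathsf{QCMA}$-hardness by reduction from the ground-energy problem restricted to low-complexity ansatz states (the paper uses the \emph{low complexity low energy states} problem of Ref.~\cite{wocjan2003two}) with a single training point. The gadgets differ slightly. The paper sets $y_1=0$ and $O(x_1)=H'$, so the risk becomes $\left(\mathrm{Tr}[\rho(\vartheta)H']\right)^2$ and the promise gap $(\alpha,\beta)$ maps to $(\alpha^2,\beta^2)$; you instead shift the label so that the squared loss is monotone in the energy, which avoids the squaring but introduces a sign slip: with $y_0$ sufficiently \emph{negative}, $(f-y_0)^2$ is monotonically \emph{increasing} in $f$ (so minimizing the risk minimizes the energy, as desired), not decreasing as you wrote; as stated the two halves of your sentence contradict each other, though the fix is one word. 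You also raise, and only sketch a resolution to, the question of whether an arbitrary coefficient vector $(h_k)_k$ can be realized through a restricted encoding basis such as the binary Fourier basis; the paper sidesteps this entirely by allowing the data-dependent observable $O(x_1)$ to be the Hamiltonian $H'$ itself, which is within the scope of the proposition as stated. Your extra care here is legitimate if one insists the reduction instance live inside the concrete circuit families of Section~\ref{ss:qml}, but the Gram-matrix engineering you gesture at is left unproven and is not needed for the informal claim.
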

    \begin{proof}[Proof sketch:]
        We demonstrate that ERM over the quantum hypothesis family reduces to a QCMA-complete version of the ground-state energy problem.
    
    	Consider the hypothesis class $\mathcal{F}=\{f_\vartheta\,|\,\vartheta\in\Theta\}$ of functions from a PQC with a flipped architecture.
        These functions take 
        the form
        \begin{align}\label{eq:constr_erm}
        	f_\vartheta(x) = \mathrm{Tr}[\rho(\vartheta) O(x)]\,,
        \end{align}
    	where $\rho(\vartheta) = W(\vartheta)\rho_0 W(\vartheta)^\dagger$ is an $n$-qubit quantum state prepared by a polynomial-depth PQC $W(\vartheta)$ specified by classical parameters $\vartheta$, and $O(x)$ is a data-dependent observable.
        
        ERM requires finding $\arg\min_{\vartheta\in\Theta}\hR_S(f_\vartheta)$ with the mean square error loss.
        In Appendix~\ref{a:reduction}, we show this is equivalent to minimizing the energy $\Tr\left[\rho(\vartheta)^{\otimes2}H(S)\right]$,
        where the Hamiltonian $H(S) \propto \sum_{i=1}^N(O(x_i)-y_i\bbI)^{\otimes2}$ encodes the training data.
    	For suitable choices of observables and data, the problem of 
        finding the minimum energy over the restricted set of states $\rho(\vartheta)$ preparable by the PQC is QCMA-complete.
    \end{proof}

    This proposition suggests difficulties for ERM 
    using the hypothesis class of the quantum learning model. While non-convex optimization is generally hard, this worst-case result does not preclude the existence of specific model-task pairs where ERM is quantum efficient.
    Nonetheless, the implication is clear: ERM with Class~\ref{class:2} circuits is worst-case quantum hard.

    Note that Propositions~\ref{prop:quadratic-opt} and~\ref{prop:qcma-complete} present a seemingly paradoxical contradiction: a specific learning task appears classically easy yet quantum hard.
    This would violate the simple principle that classical computers are a subset of quantum computers.
    The subtlety lies here in the definition of the search space.
    By invoking ERM \emph{with respect to different hypothesis families}, we artificially increase the difficulty of the problem for the quantum learner. If the quantum leaner were required to optimize over the larger family $\calF'$, the existence of an efficient classical algorithm would immediately imply also the existence of an efficient quantum algorithm for this.
    Ultimately, it is the non-convexity of the set $\calF$ that makes the problem more complex, independently of the quantum learner's ability  to efficiently \emph{identify} the functions. Thus, without additional structural assumptions, circuit families in Class~\ref{class:2} cannot yield quantum advantage in ERM.

    Finally, we consider Class~\ref{class:3}, which comprises circuits whose hypothesis families are not classically evaluatable (and thus not identifiable).
    Generally, we can rule out the possibility of efficient classical learners solving ERM for this class, as it contains functions that cannot be efficiently evaluated on a classical computer.
    On the other hand, while unstructured quantum circuits are worst-case hard to optimize~\cite{bittel2021training} (consistent with Proposition~\ref{prop:qcma-complete}), known examples of quantum advantage in ERM exist, both in quantum kernel methods~\cite{Liu2021discretelog} and in variational QML~\cite{gilfuster2025relation}.

    For completeness, we note that a small modification to Class~\ref{class:3} would make quantum advantage for ERM impossible.
    If the kernel function $K(x,x')=\sum_k T_k(x)T_k(x')$ were classically efficient to compute, then a similar version of Proposition~\ref{prop:quadratic-opt} would hold, making ERM classically easy again over the larger function family.
    Examples exist of exponentially large function bases that yield efficiently computable kernels, such as \emph{entangled tensor kernels}~\cite{shin2025newperspectivesquantumkernels, sweke2025kernel}.
    Nevertheless, the classical efficiency of the kernel function is an additional assumption that does not hold for Class~\ref{class:3} in general. Therefore, the potential for quantum advantage in ERM remains valid for this class.

\subsection{Learning separations}

    The previous section discussed \emph{empirical risk minimization}  (ERM), namely the first step in supervised learning.
    However, achieving a low empirical risk is an insufficient condition for solving a learning task; the learner must also avoid \emph{overfitting} the training data to ensure \emph{generalization} to unseen data.
    In this section, we discuss scenarios where quantum advantage in learning may exist even when ERM is classically efficient, or where it arises from the improved generalization capabilities of 
    the quantum model.

    We begin with Class~\ref{class:1}, which comprises functions that are both evaluatable and identifiable. For this class, quantum advantage is generally ruled out.
    Since the quantum model is equipped with a classical training algorithm and the resulting function allows for efficient classical evaluation, every step of the learning process can be accurately reproduced on a classical computer. Consequently, the labeling function produced by the quantum learner can be approximated efficiently, placing this quantum learners within the complexity class $\mathsf{P/poly}$ 
    (polynomial-time algorithms with polynomial-size advice).

    Class~\ref{class:2} presents a more nuanced landscape regarding expected risk.
    Recall from Proposition~\ref{prop:quadratic-opt} that a classical learner can efficiently minimize the empirical risk over the relaxed hypothesis family $\calF'$.
    However, success in ERM does not guarantee success in learning: the classical learner might converge to a solution in $\calF'$ that fits the training data perfectly but fails to generalize. Conversely, the quantum learner, restricted to the subset $\calF$, may benefit from an inductive bias that favors generalization.
    
    Recent literature on classical surrogates delineates the boundaries of dequantization for these models~\cite{huang2021power, schreiber2023classical, landman2022classicallyapproximatingvariationalquantum, sweke2025potential, sahebi2025dequantizationsupervisedquantummachine, sweke2025kernel}.
    These works identify sufficient conditions on the task-model pair under which quantum advantage is impossible.
    These results do not claim that dequantization is always possible, thus implicitly stating that there exist specific (potentially contrived and adversarial) task-model pairs outside these conditions where quantum advantage may hold.
    In terms of computational complexity, function families in Class~\ref{class:2} are contained in $\BPPqgenpoly$ (\emph{bounded-error probabilistic polynomial time with polynomial-sized quantum-bounded advice}). Here, the polynomial-size quantum advice overcomes the hardness of identification: the coefficients that specify the function could be inferred from evaluating the function on a polynomial subset of inputs.

    A notable subclass of Class~\ref{class:2} includes so-called \emph{shadow models}~\cite{Jerbi2024shadows} which follow the principle \enquote{train quantum, deploy classical}.
    Examples of learning separations with shadow models often rely on cryptographic hardness assumptions, such as the hardness of discrete cube root problem~\cite{gyurik2023exponential}.
	
	Finally, task-model pairs based on circuit families in Class~\ref{class:3} offer the most straightforward path to quantum advantage.
    Unsurprisingly, the same examples that led to quantum advantage in ERM (such as the discrete logarithm problem~\cite{Liu2021discretelog} or specific variational structures~\cite{gilfuster2025relation}) also yield quantum advantage in the whole learning task.
	The complexity classes typically associated to this idea are $\BQP$ (quantum-efficient functions) and $\Heurpoly$ (classically-learnable functions).
	Formally, achieving quantum advantage requires identifying function families that lie within $\BQP$ but outside $\Ppoly$~\cite{molteni2024exponentialquantumadvantageslearning}.
    
    Recent work~\cite{molteni2025quantummachinelearningadvantages} also explores learning separations using the term \enquote{identification}, yet the context differs significantly from ours. In our framework, identification (Property~\ref{pty:identifiable}) refers to classically-efficiently evaluating $f_{\vartheta}$ given $\vartheta$, a property specific to Class~\ref{class:1}. In contrast, Ref~\cite{molteni2025quantummachinelearningadvantages} defines identification as a specific learning task: recovering the underlying parameters $\vartheta$ of the data-generating process, as opposed to merely learning the function outputs. These frameworks are incomparable, as their task is applicable to functions in any of the three classes we introduce.

\subsection{Approximate randomized dequantization}
	
	Recent developments in dequantization can be interpreted as the effective collapse from Class~\ref{class:3} into Classes~\ref{class:1} and~\ref{class:2} in some settings.
	The study of \emph{barren plateaus}~\cite{larocca2025barren} has notably inspired new approaches in this direction~\cite{bermejo2024quantumconvolutionalneuralnetworks, mele2024noiseinducedshallowcircuitsabsence, miller2025simulationfermionic, angrisani_classically_2024}. The central insight is that, once a PQC architecture has been fixed, even though the function family it gives rise to may be highly complex (not-classically-representable), the highly-complex functions in the class may effectively never appear.
	Formally, consider a parametrized function family $\calF=\{f_\vartheta\,|\,\vartheta\in\Theta\}$ and a probability distribution $\calP(\Theta)$ over the parameters (e.g., the initialization distribution for a gradient-based training algorithm).
	It is possible that, with high probability over $\vartheta\sim\calP$, the realized function $f_\vartheta$ is both efficiently evaluatable and identifiable classically.
	Specifically, one interpretation of the barren plateau phenomenon is that for certain circuit architectures, as the system size increases, the functions sampled from $\calP$ concentrate around the constant function $f_\vartheta(x)=0$ with overwhelming probability. 

    This concentration of measure leads to a trivial \enquote{dequantization} strategy where the classical learner simply ignores the circuit structure and outputs $0$. Despite its simplicity, this strategy successfully replicates the performance of quantum models suffering from barren plateaus with probability approaching $1$.
    
	More sophisticated algorithms, such as those based on (approximate) Pauli-propagation~\cite{rudolph2025paulipropagation,angrisani_classically_2024}, address scenarios where the quantum functions concentrate not around zero, but around non-trivial classically simulatable function families. These cases effectively fall within our Class~\ref{class:1}.
    There also exist intermediate approaches requiring \emph{quantum pre-computation}, i.e.,   extracting some information from the quantum algorithm by performing a few possibly-randomized measurements~\cite{Elben_2022}, which corresponds to Class~\ref{class:2}.
    
	The formal statements of these results take the following form: given the quantum function family $\calF=\{f_\vartheta\,|\,\vartheta\in\Theta\}$, there exists a classical function family $\calF'=\{f'_\vartheta\,|\,\vartheta\in\Theta\}$, such that, with high probability over both inputs $x$ and parameters $\vartheta$, the quantum function $f_\vartheta\in\calF$ is well-approximated by $f'_\vartheta\in\calF'$.
	In terms of complexity classes, this implies that while the full function family may theoretically lie in $\BQP$ and outside $\BPP$, the \emph{effective} family induced by the parameter distribution (i.e., the subset of functions that are likely to be drawn) is well-approximated by a function family in $\BPP$ or $\BPPqgenpoly$.

\section{Conclusions and outlook}\label{s:discussion}

    We have presented a unified framework that synthesizes current approaches to dequantization in variational \emph{quantum machine learning}  (QML).
    Our perspective distinguishes QML models by analyzing the relationship between the \emph{parametrized quantum circuits}  (PQCs) that define them and the classical efficiency of the resulting function families.
    This approach enables a rigorous classification of variational QML models based on two distinct complexity requirements: the efficient \emph{evaluation} of the function outputs and the efficient \emph{identification} of the function from its parameters. Specifically, we identified sufficient conditions for efficient classical simulation (implying both identification and evaluation) and efficient classical surrogation (implying only evaluation). This framework advocates for shifting the study of quantum learning advantage from a purely circuit-based perspective to one focused on the characterization of the function spaces generated by these circuits.

    Applying this classification to the existing literature, we observe that successful dequantization efforts consistently trace back to specific structural building blocks. These blocks reduce the effective complexity of the circuit, linking it to properties such as tensor network factorizability or algebraic low-rankness (e.g., Clifford or free-fermionic structures), which is then reflected on the output family of functions. Our analysis confirms that non-dequantizable PQCs (those capable of exhibiting quantum advantage) usually rely on deep, highly doped circuits that avoid these simplifications. Furthermore, we argue that because efficient representations cannot typically be found via black-box optimization, reliable quantum advantage must be predicated on known structural hardness. In a landscape where many heuristic PQCs have failed to demonstrate advantage, our perspective suggests that future QML designs must explicitly avoid the \enquote{dequantizable building blocks} we have cataloged.

    A central contribution of our perspective is the detailed exploration of Class~\ref{class:2}: the regime of functions that are classically evaluatable but potentially hard to identify. This class simplifies the dequantization landscape by isolating models vulnerable to classical surrogation while retaining quantum features related to parameter hardness. We exemplified this with the case of \emph{partially simulable flipped models} applied to \emph{empirical risk minimization} (ERM). Here, we have demonstrated that no quantum advantage is possible for this task within constructive PQC examples; under these structural constraints, the problem relaxes to a convex quadratic optimization over a fixed, polynomial-sized basis, which acts as an efficient classical surrogate. Conversely, we have highlighted that the native quantum optimization task is, ironically, worst-case hard due to the non-convex constraints of the parametrization.
    For completeness, we also remark an example, outside of the structural PQC assumptions, where quantum advantage is possible in ERM within Class~\ref{class:2}.

    However, Class~\ref{class:2} is not without potential quantum advantage, if one looks beyond ERM. We indeed identified learning models within this class that may exhibit separations, particularly when combined with quantum advice.  Notably, \emph{shadow models} fall into this category; they possess the property of efficient evaluation (allowing classical deployment) but can leverage quantum data access to demonstrate learning separations. Ultimately, our analysis positions Class~\ref{class:3} (circuits generating hypothesis families that generally admit no efficient classical representation) as the primary, though not exclusive, candidate for quantum advantage. 

    Finally, we note that our framework primarily demands \emph{worst-case} simulation hardness for assigning families to Classes~\ref{class:1} or~\ref{class:2}. However, recent research has highlighted the relevance of \emph{average-case} simulation for dequantizing QML in practical scenarios~\cite{cerezo2021variational,bermejo2024quantumconvolutionalneuralnetworks}. Looking forward, we propose bridging these perspectives by analyzing how PQCs in Classes~\ref{class:1} and~\ref{class:2} relate to function families that are simulable in the average case. Investigating whether the \enquote{effective} function families generated by practical training setups collapse from Class~\ref{class:3} to Class~\ref{class:1} % (as seen in the barren plateau induced triviality) 
    represents a critical path toward understanding the realistic power of quantum machine learning.

\subsection*{Acknowledgements}

    The authors thank Vedran Dunjko, Sofiène Jerbi, and Riccardo Molteni for useful comments in a previous version of this manuscript, and Seongwook Shin for insightful discussions. 
    S.M.-L. acknowledges funding from the Spanish Ministry for Digital Transformation and of Civil Service of the Spanish Government through the QUANTUM ENIA project call - Quantum Spain, EU through the Recovery, Transformation and Resilience Plan – NextGenerationEU within the framework of the Digital Spain 2026, as well as the mobility funds from BSC's Severo Ochoa award (CEX2021-001148-S financiado por MCIN/AEI/10.13039/501100011033).
    E.G.-F. is a 2023 Google PhD Fellowship recipient and acknowledges support by the Einstein Foundation (Einstein Research Unit on Quantum Devices), BMBF (Hybrid), and BMWK (EniQmA), as well as travel support from the European Union’s Horizon 2020 research and innovation programme under grant agreement No 951847.
    C.B.-P. also acknowledges support from the BMBF (MUNIQC-Atoms, and Munich Quantum Valley). 
    T.G. acknowledges support from the European Research Council (project DebuQC, grant agreement No. 101098279). 

\bibliography{sources}

\clearpage
\onecolumngrid
\appendix

\setcounter{theorem}{0}
\setcounter{proposition}{0}

\section{Explicit structure of outcome functions}
\subsection{Shallow circuits} \label{a:shallow}

    Consider a circuit of the form discussed in Section~\ref{sss:shallow-depth}, \ie composed of an initial state preparation circuit, some data encoding layers of the form $S(x_{i_1},\dots,x_{i_n})=\bigotimes_k \, \exp(i\frac{\pi}{2} x_{i_k} Z_k)$, possibly further entangling layers between the encoding ones and a final Pauli measurement. The quantum gates in the state preparation circuit, entangling layers and observable may depend on the trainable parameters $\vartheta$ and the total depth of the circuit is constrained to be logarithmic in the qubit number $n$. We will now show how the outcome $f_\vartheta(x)$ of this circuit may be written in the form~\eqref{eq:MPS-function} with an MPS coefficient of polynomial bond dimension. We will use a construction similar to the one introduced in Ref.~\cite{shinDequantizingQuantumMachine2023}.
    First of all, let us represent each element of the circuit in the Pauli basis, defined as the set of operators
    \begin{align}
       P_{\boldsymbol{j}}= \bigotimes_k \; (-i)^{j^0_k j^1_k} \; (Z_k)^{j^0_k} \, (X_k)^{j^1_k} \,,
    \end{align}
    where the 
    multi-index $\boldsymbol{j}\coloneqq (j_0,j_1,\dots,j_n)$ is composed of couples of bits $j_k=(j_k^0,j_k^1)$. 
    As this is a basis of the space of $n$-qubit operators, we can expand on it all the relevant quantities of the problem, including the initial 
    state and final observable
    \begin{align}
         \rho &= \sum_{\boldsymbol{j}} \rho_{\boldsymbol{j}} P_{\boldsymbol{j}}\,, \label{eq:rho-pauli}\\
         O &= \sum_{\boldsymbol{j}} O_{\boldsymbol{j}} P_{\boldsymbol{j}}\,, \label{eq:O-pauli}
    \end{align}
    and all intermediate operators, including the encoding layer $S(x)$ and 
    any entangling layer $W$
    \begin{align}
        S(x)^\dagger P_{\boldsymbol{i}}  S(x) &= \sum_{\boldsymbol{j}} S_{\boldsymbol{i} , \boldsymbol{j}}(x) P_{\boldsymbol{j}} \,,\label{eq:S-pauli}\\
        W^\dag P_{\boldsymbol{i}}  W &= \sum_{\boldsymbol{j}} W_{\boldsymbol{i} ,\boldsymbol{j}} P_{\boldsymbol{j}} \,.
    \end{align}
    
    The shallow depth nature of the circuit elements implies that  each of the tensor objects $\rho_{\boldsymbol{j}}$, $O_{\boldsymbol{j}}$, $S_{\boldsymbol{i},  \boldsymbol{j}}(x)$ and $W_{\boldsymbol{i},  \boldsymbol{j}}$ can be expressed as a $\poly(n)$ bond-dimension MPS or MPO by standard tensor network techniques. If these circuit elements are constructed from gates that depend explicitly on the trainable parameters $\vartheta$, then also the tensors of the corresponding MPS/MPO representation will depend efficiently on these parameters. In particular, some straightforward calculations (which can be checked by substitution) allows us to express $S_{\boldsymbol{i},  \boldsymbol{j}}(x)$ as
    \begin{align}
        S_{\boldsymbol{i},  \boldsymbol{j}}(x)=\sum_{\alpha} \, S^{\alpha}_{\boldsymbol{i}, \boldsymbol{j}} \:  T^{(1)}_{\alpha_1}(x_1)\cdots T^{(n)}_{\alpha_n}(x_n)=\sum_{\alpha} \prod_k \, S^{\alpha_k}_{i_k , j_k} \:  T^{(k)}_{\alpha_k}(x_k) \: , \label{eq:S-tensor}
    \end{align}
    where $\alpha=(\alpha_1,\dots,\alpha_n)\in\{0,1,2\}^n$, $k\in\{1,\dots ,n\}$ and for any $k$ $T^{(k)}(x) \coloneqq \big\{ T_0(x), \, T_1(x), \, T_2(x) \big\} = \big\{ 1, \cos(x_k \pi), \sin(x_k \pi) \big\}$, with the indices $\alpha_k$ fixing $T^{(k)}_{\alpha_k}(x_k)$ to one of these functions. Lastly, $S^{\alpha_k}_{i_k , j_k}$ is given by
    \begin{align}
        S^0_{i_k , j_k} &= \delta_{i^1_k,0} \: \delta_{i^1_k,j^1_k} \: \delta_{i^0_k,j^0_k},\\
        S^1_{i_k , j_k} &= \delta_{i^1_k,1} \: \delta_{i^1_k,j^1_k}\: \delta_{i^0_k,j^0_k}, \\
        S^2_{i_k , j_k} &= \delta_{i^1_k,1}  \: \delta_{i^1_k,j^1_k} \: \delta_{i^0_k+j^0_k, 1} \:(-1)^{j_k^0} \,.\label{eq:S_alpha}
    \end{align}
    Consider now the case of a simple circuit with a single encoding layer $S(x)$. We assume all other gates are included in the definition of a (possibly $\vartheta$-dependent) initial state $\rho$ and observable $O$. In this case, the circuit outcome is given by $f_\vartheta(x)=\Tr{\rho S(x)^\dag O S(x)}$, which combining equations~\eqref{eq:rho-pauli},~\eqref{eq:O-pauli},~\eqref{eq:S-pauli} and~\eqref{eq:S-tensor} above can be expressed as
    \begin{align}
        f_\vartheta(x) =\sum_\alpha\, \left[\sum_{\boldsymbol{i},\boldsymbol{j}} \rho_{\boldsymbol{i}} S^{\alpha}_{\boldsymbol{i}, \boldsymbol{j}} O_{\boldsymbol{j}}\right]  \;T^{(1)}_{\alpha_1}(x_1)\cdots T^{(n)}_{\alpha_n}(x_n)\,. \label{eq:MPS-result}
    \end{align}
    We recognize a function of the efficient form~\eqref{eq:MPS-function}, where the MPS coefficient is given by the expression in the square bracket, which can be seen as a tensor contraction of the MPS/MPO objects corresponding to  $\rho_{\boldsymbol{i}}$, $S^{\alpha}_{\boldsymbol{i}, \boldsymbol{j}}$ and $O_{\boldsymbol{j}}$, as illustrated in Figure~\ref{fig:MPS}. As is clear from the diagram, contracting this will produce an MPS coefficient with bond dimension equal to the product of the bond dimensions of $\rho_{\boldsymbol{i}}$ and $O_{\boldsymbol{j}}$, which is polynomial by assumption. 

    \begin{figure}[ht]
        \centering
        \includegraphics[width=0.25\linewidth]{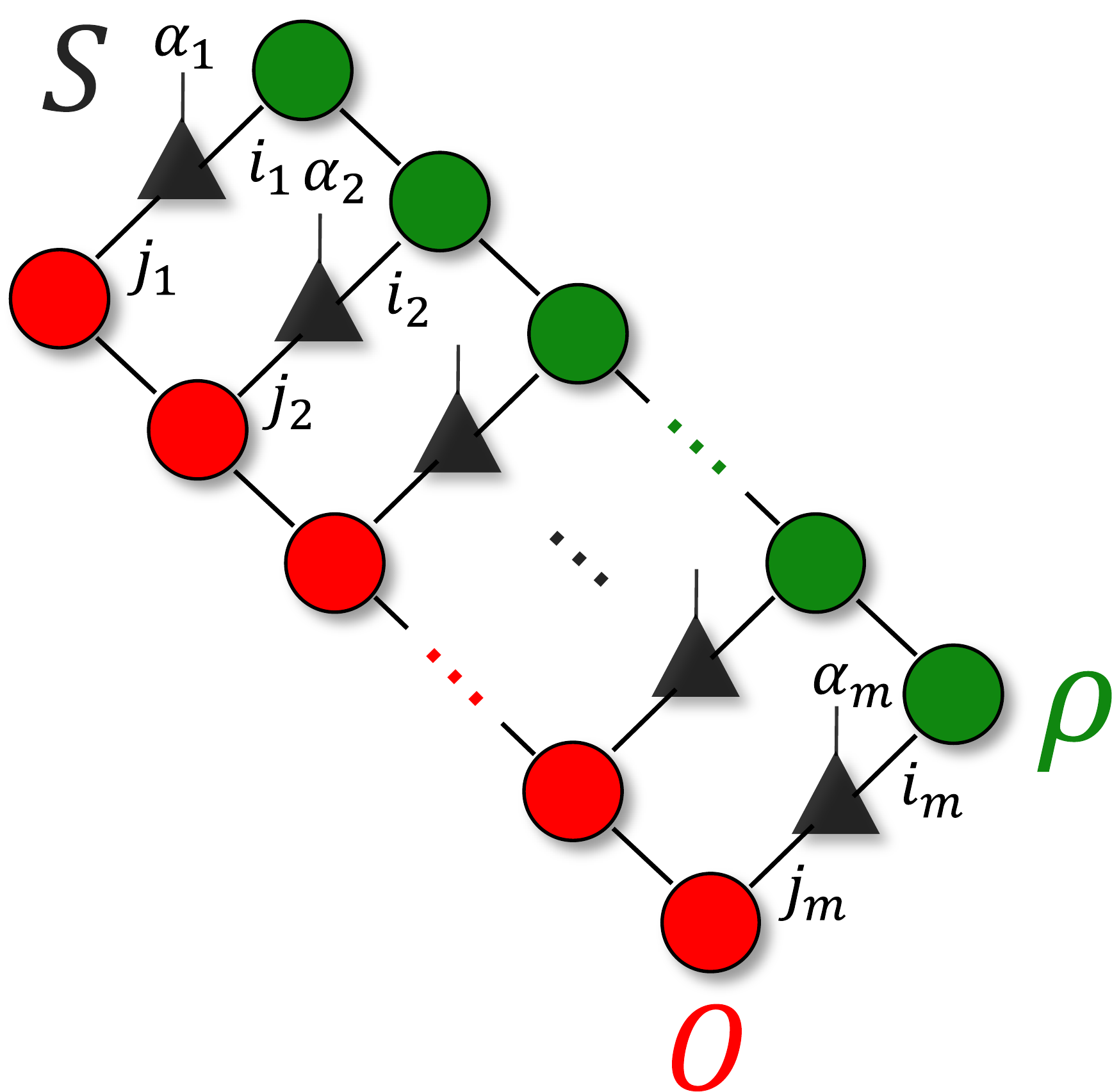}
        \caption{A tensor diagram representation of the expression~\eqref{eq:MPS-result} in terms of the MPS/MPOs for $\rho_{\boldsymbol{i}}$, $S^{\alpha}_{\boldsymbol{i}, \boldsymbol{j}}$ and $O_{\boldsymbol{j}}$.}
        \label{fig:MPS}
    \end{figure}

    Let us then address the more general situation where one has several encoding layers $S(x)$ separated by entangling circuit blocks $W$. As we are considering circuits of logarithmic depth, also each individual block $W$ should be representable as an MPO of polynomial bond dimension so the combined bond dimension of all these $W$ blocks also remains polynomial. Combining these elements together as before we find that the total output of the circuit can be expressed as
    \begin{align}
        f_\vartheta(x)=\sum_{\alpha} C_{\alpha_1, \dots, \alpha_m} T^{(1)}_{\alpha_1}(x_1)\cdots T^{(m)}_{\alpha_m}(x_m)\,, \label{eq:PEPS-result}
    \end{align}
    for a circuit with $m$ encoding gates. Here the coefficient tensor $C_{\alpha_1, \dots, \alpha_m}$ has the two-dimensional structure illustrated in Figure~\ref{fig:PEPS}. This object can be recast into a one-dimensional MPS structure by choosing an ordering of the legs $\alpha_i$. If this is done in the manner illustrated in the figure (\ie `snaking' first along the shallow depth of the circuit and then along its width), the resulting MPS will have a polynomial bond dimension, as each bipartition of the indices $\alpha_i$ only cuts a logarithmic number of bonds of the total tensor.

    \begin{figure}[hb]
        \centering
        \includegraphics[width=0.5\linewidth]{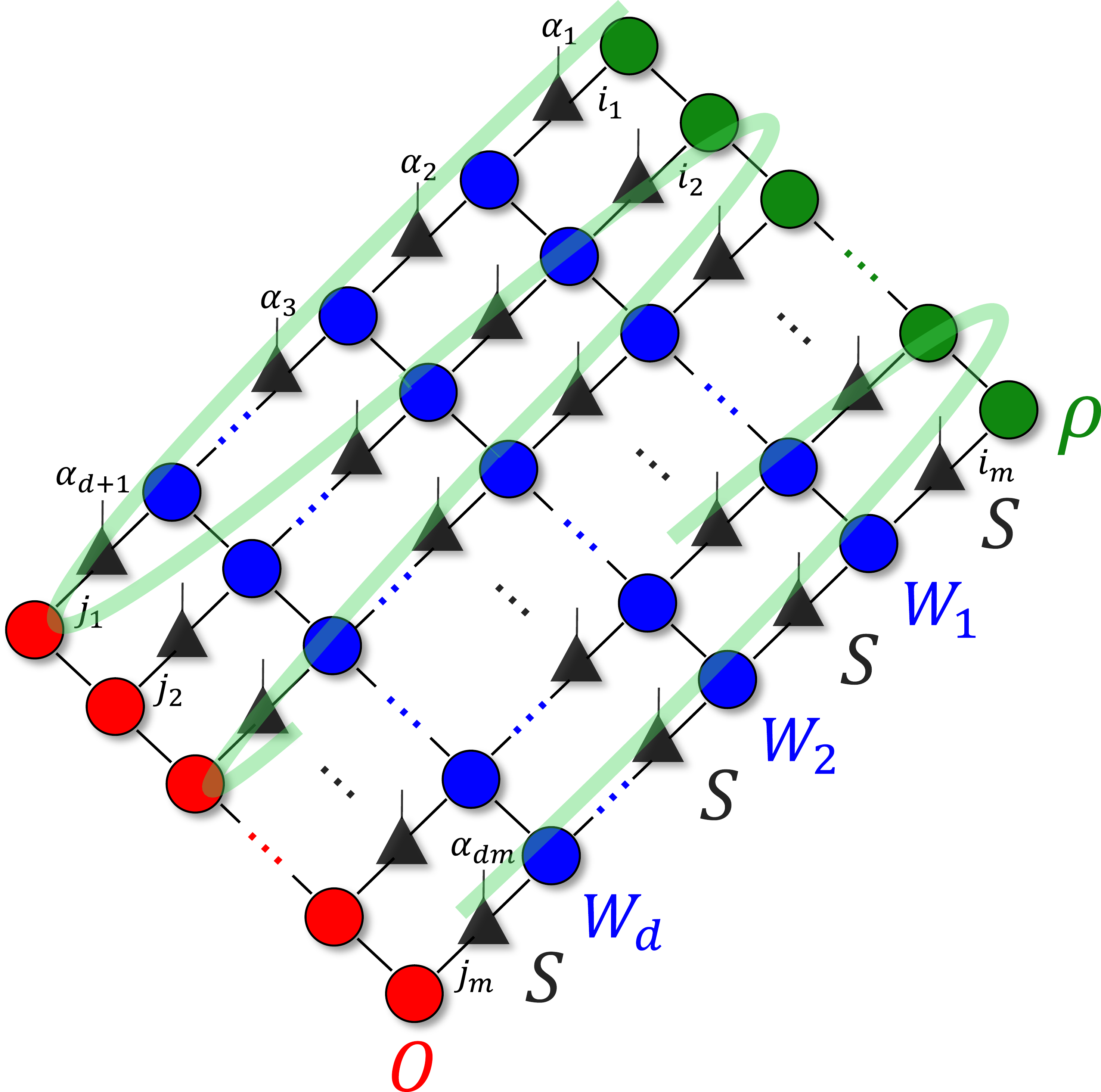}
        \caption{A tensor diagram representation of the expression~\eqref{eq:PEPS-result} in terms of the MPS/MPOs for $\rho_{\boldsymbol{i}}$, $S^{\alpha}_{\boldsymbol{i}, \boldsymbol{j}}$, $W^{\alpha}_{\boldsymbol{i}, \boldsymbol{j}}$ and $O_{\boldsymbol{j}}$.}
        \label{fig:PEPS}
    \end{figure}

    Although the same manipulation can be done for circuits without assumptions on the depth, in that case we cannot ensure the final MPS is efficient, and in general it will not be.
    
    \subsection{Low doping circuits}\label{a:low-doping}
    
    We consider circuits of the low doping form discussed in Section~\ref{sss:low-doping}. As in the previous Section~\ref{a:shallow}, we represent the circuits in the Pauli basis. As we now assume binary inputs $x_j\in\{0,1\}$, the matrix $S_{\boldsymbol{i},  \boldsymbol{j}}(x)$ representing the encoding layer $S(x)=\bigotimes_k \, \exp(i\frac{\pi}{2} x_{i_k} Z_k)$ reduces to the form 
    \begin{align}
        S_{\boldsymbol{i},  \boldsymbol{j}}(x)=\sum_{\alpha} \prod_k \, S^{\alpha_k}_{i_k , j_k} \:  T_{\alpha}(x) \label{eq:S-tensor-low}
    \end{align}
    where $\alpha=(\alpha_1,\dots,\alpha_n)\in\{0,1\}^n$ and $T_\alpha(x) = (-1)^{\alpha \cdot x}\coloneqq\prod_k (-1)^{\alpha_k x_k}$ and $S^{\alpha_k}_{i_k , j_k}$ is given by
    \begin{align}
        S^{\alpha_k}_{i_k , j_k} = \delta_{i^1_k,\alpha_k} \: \delta_{i^1_k,j^1_k} \: \delta_{i^0_k,j^0_k} \,.
    \end{align}
    Considering a generic circuit composed of several encoding layers separated by arbitrary unitaries $W^{(k)}$, possibly depending on $\vartheta$, the final outcome is given by
    \begin{align}
        f_\vartheta(x)&=\sum_{\substack{\boldsymbol{i}_1,\dots, \boldsymbol{i}_{m+1}\\\boldsymbol{j}_1,\dots, \boldsymbol{j}_{m} }} \rho_{\boldsymbol{i}_1} S^{(1)}_{\boldsymbol{i}_1,\boldsymbol{j}_1}(x) W_{\boldsymbol{j}_1,\boldsymbol{i}_2}^{(1)} \cdots  S^{(m)}_{\boldsymbol{i}_{m-1},\boldsymbol{j}_m}(x)W_{\boldsymbol{j}_m,\boldsymbol{i}_{m+1}}^{(m)} O_{\boldsymbol{i}_{m+1}} \\
        &=\sum_{\boldsymbol{i}_1,\dots, \boldsymbol{i}_{m+1}} \rho_{\boldsymbol{i}_1} W_{\boldsymbol{i}_1,\boldsymbol{i}_2}^{(1)} \cdots  W_{\boldsymbol{i}_m,\boldsymbol{i}_{m+1}}^{(m)} O_{\boldsymbol{i}_{m+1}} \: \prod_{k=1}^m (-1)^{\boldsymbol{i}_k\cdot x_k}\,.
    \end{align}
    By assumption, each individual circuit layer maps a Pauli basis element to at most $\poly(n)$ other basis elements, and so does the total action of the circuit (which in particular means there can be at most logarithmically many such layers). This implies that in the sum above only at most $\poly(n)$ combinations of indices $\boldsymbol{i}_1,\dots, \boldsymbol{i}_{m+1}$ can correspond to non-vanishing terms. We thus see that $f_\vartheta$ is a polynomial sum of elements of the Fourier functional basis.

\section{Proof of Proposition 1}\label{a:quadratic}
    Here we re-state and prove Proposition~\ref{prop:quadratic-opt}.

    \begin{proposition}[Rigorously restated]
        Let the supervised learning task be defined by a training set $S = \{(x_i, y_i)\}_{i=1}^{N}$. The classical task involves the hypothesis class $\mathcal{F}'$ of functions $f_C : \mathcal{X} \to \mathbb{R}$ defined as linear combinations of $m = \poly(n)$ basis functions $T_k(x)$ as
        \begin{align}
            f_C(x) = \sum_{k=1}^{m} C_k T_k(x)\,,
        \end{align}
        where $\mathbf{C} = (C_1, \dots, C_m) \in \mathbb{R}^m$ is a vector of real-valued coefficients. We assume each basis function $T_k(x)$ is efficiently computable.
        The learning objective is to find the optimal coefficient vector $\mathbf{C}^\ast$ that minimizes the empirical risk, defined by the mean squared error
        \begin{align}
            R_S[f_C] = \frac{1}{N} \sum_{i=1}^{N} (f_C(x_i) - y_i)^2\,.
        \end{align}
        The problem of finding the coefficient vector $\mathbf{C}^\ast$ that minimizes the empirical risk $R_S[f_C]$ is equivalent to minimizing the  degree-2 polynomial 
        \begin{align}
            Q(\mathbf{C}) = \sum_{k=1}^{m} \sum_{k'=1}^{m} C_k C_{k'} M_{k,k'} - 2 \sum_{k=1}^{m} C_k V_k + Z\,
        \end{align}
        (a convex quadratic form) in the coefficients $C_k$, where
        \begin{itemize}
            \item $M_{k,k'} = \frac{1}{N} \sum_{i=1}^{N} T_k(x_i) T_{k'}(x_i)$,
            \item $V_k = \frac{1}{N} \sum_{i=1}^{N} y_i T_k(x_i)$,
            \item $Z = \frac{1}{N} \sum_{i=1}^{N} y_i^2$.
        \end{itemize}
        The minimization of this quadratic form can be solved classically in polynomial time.
    \end{proposition}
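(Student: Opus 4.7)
The plan is to dispatch the claim in two stages: an algebraic identification of the empirical risk with the stated quadratic form, followed by a standard convex-optimization argument for efficient minimization. For the first stage, I would simply expand the squared loss as $(f_C(x_i)-y_i)^2 = f_C(x_i)^2 - 2 y_i f_C(x_i) + y_i^2$, substitute $f_C(x_i) = \sum_k C_k T_k(x_i)$, and interchange the finite sums. The quadratic term yields $\sum_{k,k'} C_k C_{k'} T_k(x_i) T_{k'}(x_i)$, which after averaging over $i$ produces the matrix $M_{k,k'}$. The cross term analogously produces $V_k$, and the final term produces $Z$. The identity $R_S[f_C] = Q(\mathbf{C})$ then follows by inspection; this is a mechanical calculation and not the interesting part of the proof.

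For the second stage, I would argue convexity and then solve the resulting normal equations. The Hessian of $Q$ is $2M$, and $M$ admits the Gram representation $M_{k,k'} = \frac{1}{N}\langle \mathbf{t}_k, \mathbf{t}_{k'}\rangle$ with $\mathbf{t}_k = (T_k(x_1), \ldots, T_k(x_N))^\top \in \mathbb{R}^N$, so $M \succeq 0$ and $Q$ is convex. The first-order optimality condition $\nabla Q(\mathbf{C}) = 0$ reduces to the linear system $M\mathbf{C} = V$. Since $m = \poly(n)$ and each $T_k$ is efficiently computable, the entries of $M$ and $V$ can be assembled in $\poly(n, N)$ time; the system itself is then solvable in $O(m^3)$ operations by Gaussian elimination. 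When $M$ is singular, one selects the minimum-norm solution $\mathbf{C}^\ast = M^+ V$ via the Moore--Penrose pseudoinverse, which is likewise computable in polynomial time.

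An alternative route, which I would mention for completeness and to tie into the paper's framing of surrogates, invokes the \emph{representer theorem}: the optimal coefficients admit the parametrization $C_k = \sum_i \alpha_i T_k(x_i)$, reducing the problem to \emph{kernel ridge regression} with kernel $K(x,x') = \sum_{k=1}^m T_k(x) T_k(x')$. Because this kernel is a polynomial-size sum of efficient functions, it is itself efficiently computable, and the resulting $N \times N$ linear system in $\alpha$ is polynomial-time solvable. The main subtle point is not computational but conceptual: one must be careful to distinguish the \emph{relaxed} optimization over $\calF'$ (which this proof addresses and which is convex) from the original optimization over $\calF$, where the coefficients are constrained to lie in the image of the map $\vartheta \mapsto (\Tr[\rho(\vartheta) P_k])_k$, a generally non-convex set. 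Neither singular $M$ nor the choice between primal and dual formulations presents a genuine obstacle, so the argument is essentially routine once the reduction to a convex quadratic form is in hand.
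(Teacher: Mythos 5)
Your proposal is correct and follows essentially the same route as the paper's proof: direct algebraic expansion to obtain the quadratic form, convexity via the positive semi-definiteness of the Gram structure, solution of the resulting linear system, and the representer-theorem/kernel-ridge-regression alternative. If anything, your identification of the Hessian as the $m\times m$ matrix $2M$ with $M_{k,k'}=\frac{1}{N}\langle\mathbf{t}_k,\mathbf{t}_{k'}\rangle$ is more precise than the paper's appendix, which states the Hessian as the $N\times N$ kernel matrix $\sum_k T_k(x_i)T_k(x_j)$ indexed by data points rather than by coefficients; your explicit handling of the singular case via the pseudoinverse is also a welcome addition.
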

    \begin{proof}
        The proof proceeds by direct algebraic expansion of the empirical risk functional $R_S[f_C]$. First, we substitute the definition of the classical model $f_C(x)$ into the risk functional
        \begin{align}
            R_S[f_C] = \frac{1}{N} \sum_{i=1}^{N} \left( \left(\sum_{k=1}^{m} C_k T_k(x_i)\right) - y_i \right)^2\,.
        \end{align}
        We expand the square term and distribute the outer summation
        \begin{align}
            R_S[f_C] &= \frac{1}{N} \sum_{i=1}^{N} \left[ \left(\sum_{k=1}^{m} C_k T_k(x_i)\right)^2 - 2y_i\left(\sum_{k=1}^{m} C_k T_k(x_i)\right) + y_i^2 \right] \\
            \nonumber
            &= \frac{1}{N} \left[ \sum_{i=1}^{N} \left(\sum_{k=1}^{m} C_k T_k(x_i)\right) \left(\sum_{k'=1}^{m} C_{k'} T_{k'}(x_i)\right) - \sum_{i=1}^{N} 2y_i\left(\sum_{k=1}^{m} C_k T_k(x_i)\right) + \sum_{i=1}^{N} y_i^2 \right]\,.
        \end{align}
        Now, we reorder the summations for the quadratic and linear terms in $\mathbf{C}$
        \begin{align}
            R_S[f_C] &= \frac{1}{N} \left[ \sum_{k=1}^{m}\sum_{k'=1}^{m} C_k C_{k'} \left(\sum_{i=1}^{N} T_k(x_i) T_{k'}(x_i)\right) - 2\sum_{k=1}^{m} C_k \left(\sum_{i=1}^{N} y_i T_k(x_i)\right) + \sum_{i=1}^{N} y_i^2 \right] \\
            &= \sum_{k,k'} C_k C_{k'} \underbrace{\left( \frac{1}{N}\sum_{i=1}^{N} T_k(x_i) T_{k'}(x_i) \right)}_{M_{k,k'}} - 2\sum_{k} C_k \underbrace{\left( \frac{1}{N}\sum_{i=1}^{N} y_i T_k(x_i) \right)}_{V_k} + \underbrace{\left( \frac{1}{N}\sum_{i=1}^{N} y_i^2 \right)}_{Z}\,.
             \nonumber
        \end{align}
        This is exactly the quadratic form $Q(\mathbf{C}) = \sum_{k,k'} C_k C_{k'} M_{k,k'} - 2\sum_k C_k V_k + Z$, which proves the equivalence between terms. Lastly, the Hessian of such a problem is the matrix $M$ with entries $M_{i,j} =\sum_k T_k(x_i)T_k(x_j)$, which is a kernel matrix~\cite{scholkopf2002learning}. Because kernel problems are \emph{positive semi-definite} (PSD), the Hessian will 
        also be PSD, hence it is a convex problem and efficiently solvable.
    \end{proof}

\section{Proof of Proposition 2}\label{a:reduction}
    
    Here we re-state and prove Proposition~\ref{prop:qcma-complete}.

    \begin{proposition}[Rigorously restated]
        Let the quantum task be defined by a hypothesis class $\mathcal{F}$ of functions $f_\vartheta: \mathcal{X} \to \mathbb{R}$ from a flipped architecture PQC. These functions are of the form
        \begin{align}
        f_\vartheta(x) = \mathrm{Tr}[\rho(\vartheta) O(x)]\,,
        \end{align}
        where $\rho(\vartheta) = W(\vartheta)\rho_0 W(\vartheta)^\dagger$ is an $n$-qubit quantum state prepared by a PQC $W(\vartheta)$ specified by classical parameters $\vartheta$, and $O(x)$ is a data-dependent observable. The learning objective is to find the optimal parameters $\vartheta^*$ that minimize the empirical risk $R_S[f_\vartheta] = \frac{1}{N} \sum_{i=1}^{N} (f_\vartheta(x_i) - y_i)^2$ over the training set $S = \{(x_i, y_i)\}_{i=1}^{N}$.
        The quantum learning task of finding $\min_{\vartheta} R_S[f_\vartheta]$ is equivalent to finding the state $\rho(\vartheta^*)$ that minimizes the energy of a corresponding $2n$-qubit Hamiltonian $H(S)$
        \begin{align}
        \min_{\vartheta} R_S[f_\vartheta]
        \coloneqq 
        \min_{\vartheta} \mathrm{Tr}\left[\left(\rho(\vartheta) \otimes \rho(\vartheta)\right) H(S)\right]\,,
        \end{align}
        where the Hamiltonian $H(S)$ is constructed from the training data as
        \begin{align}
        H(S) = \frac{1}{N} \sum_{i=1}^{N} \left( O(x_i) - y_i \mathbb{I} \right) \otimes \left( O(x_i) - y_i \mathbb{I} \right)\,.
        \end{align}
        The problem of finding the minimum energy over the restricted set of states $\rho(\vartheta)$ preparable by the PQC is QCMA-complete.
    \end{proposition}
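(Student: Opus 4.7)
The plan is to split the argument into two pieces: first establishing the algebraic equivalence between the empirical risk and the stated $2n$-qubit energy expression, and then analyzing the complexity of the restricted ground-state problem that results.

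For the equivalence I would simply expand the squared loss and use the tensor-power identity $(\Tr[\rho A])^{2} = \Tr[(\rho\otimes\rho)(A\otimes A)]$ applied with $A = O(x_i) - y_i\mathbb{I}$. Summing over the training set gives
\begin{align}
R_S[f_\vartheta] = \frac{1}{N}\sum_{i=1}^{N}\Tr\!\left[\rho(\vartheta)^{\otimes 2}(O(x_i)-y_i\mathbb{I})^{\otimes 2}\right] = \Tr\!\left[\rho(\vartheta)^{\otimes 2}\,H(S)\right],
\end{align}
which immediately recasts ERM as a ground-state search over the restricted feasible set $\{\rho(\vartheta)^{\otimes 2}\,|\,\vartheta\in\Theta\}$ on $2n$ qubits.

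For the QCMA-membership half I would use the parameter string $\vartheta$ as the classical witness. A quantum verifier runs $W(\vartheta)$ twice on two independent $n$-qubit registers to prepare $\rho(\vartheta)^{\otimes 2}$, and then estimates $\Tr[\rho(\vartheta)^{\otimes 2} H(S)]$ to inverse-polynomial accuracy by sampling the $N = \poly(n)$ terms of $H(S)$, each of which is an efficiently measurable tensor square of local observables. This runs in polynomial quantum time on polynomial-size classical advice.

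For the hardness half I would reduce from the canonical QCMA-complete problem of deciding, given a local Hamiltonian $H' = \sum_j c_j P_j$ on $n$ qubits and a promise gap $\Delta \geq 1/\poly(n)$, whether some polynomial-depth circuit prepares a state with $\langle H'\rangle \leq a$ or whether every such state yields $\langle H'\rangle \geq a+\Delta$ (the QCMA-complete variant of local Hamiltonian restricted to circuit-preparable states that underlies the hardness results cited in Ref.~\cite{bittel2021training}). Taking $O(x_j) = P_j$, $y_j = -\lambda N c_j/2$, and a polynomial scaling $\lambda = \Theta(1/\Delta)$, a short expansion of the squared loss yields
\begin{align}
R_S[f_\vartheta] = \lambda\,\Tr[\rho(\vartheta)H'] + \frac{1}{N}\sum_j \Tr[\rho(\vartheta) P_j]^{2} + \mathrm{const},
\end{align}
where the quadratic residual is uniformly bounded by $1$ in absolute value. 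Choosing $\lambda$ sufficiently larger than $1/\Delta$ preserves the promise gap, producing a classically efficient, promise-gap-preserving reduction. The main obstacle is precisely this step: one must verify that the QCMA-hard source instances can be realized with a PQC ansatz $W(\vartheta)$ expressive enough to cover the states appearing in the source hardness construction, and that the quadratic residual from the mean-squared-error loss cannot close the gap in the worst case. Both points are handled by a careful normalization and scaling argument, but the associated bookkeeping---bounding $N$, the operator norm of $H'$, and the amplification factor $\lambda$---is where the real technical work would live.
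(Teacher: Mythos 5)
Your algebraic reduction to $\Tr[\rho(\vartheta)^{\otimes 2}H(S)]$ and your QCMA-membership argument coincide with the paper's (the paper's verifier estimates each $f_\vartheta(x_i)$ separately rather than measuring $H(S)$ on two copies, but this is immaterial). Where you genuinely diverge is the hardness reduction. The paper reduces from the \emph{low complexity low energy states} problem of Wocjan et al.\ using a single training point with $O(x_1)=H'$ and $y_1=0$, so that $R_S[f_\vartheta]=(\Tr[\rho(\vartheta)H'])^2$ and the promise thresholds map directly to $\alpha^2$ and $\beta^2$; this is shorter and needs no scaling argument, but it silently relies on the source instance being normalized so that $0\le\alpha<\beta$ (squaring is not monotone on all of $\mathbb{R}$, so without $H'\succeq 0$ or a similar shift the YES-case bound $(\langle H'\rangle)^2\le\alpha^2$ would fail). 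Your construction instead uses one data point per Pauli term of $H'$ with labels $y_j=-\lambda N c_j/2$, so that the loss linearizes to $\lambda\Tr[\rho H']$ plus a residual bounded by $1$; taking $\lambda=\Theta(1/\Delta)$ then preserves an inverse-polynomial (in fact constant) gap regardless of the sign of the thresholds, at the cost of the bookkeeping you describe. Both reductions are valid; yours trades the paper's brevity for robustness to the normalization of $H'$. The "main obstacle" you flag --- expressivity of $W(\vartheta)$ over the states in the source instance --- is resolved in the paper simply by \emph{defining} the PQC family as all circuits of at most $k$ gates with $\vartheta$ a classical description of the circuit, which you could adopt verbatim; and realizing each $O(x_j)=P_j$ as a data-dependent observable is immediate in the flipped architecture via Clifford encoding layers.
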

    \begin{proof}
        The proof consists of two parts. First, we show that the problem is in the complexity class QCMA. Second, we show that it is QCMA-hard by reducing a known QCMA-complete problem to it.

        A problem is in QCMA if a ``yes" instance can be verified in polynomial time by a quantum computer given a classical string as a proof (witness).
\begin{itemize}
    \item \textbf{Classical witness:} The witness for a YES instance is the classical string describing the parameters $\vartheta$ that allegedly achieve a low empirical risk, i.e., $R_S[f_\vartheta] \le a$.
    \item \textbf{Quantum verifier:} The verifier is a quantum circuit that takes the witness $\vartheta$ and the training set $S$ as input. To verify the claim, the verifier must estimate the empirical risk $R_S[f_\vartheta] = \frac{1}{N}\sum_{i=1}^N (f_\vartheta(x_i) - y_i)^2$. For each data point $(x_i, y_i) \in S$, the verifier can efficiently estimate the term $f_\vartheta(x_i) = \text{Tr}[\rho(\vartheta)O(x_i)]$ by the following steps.
    \begin{enumerate}
        \item Preparing the state $\rho(\vartheta) = W(\vartheta)\rho_0 W(\vartheta)^\dagger$ by applying the circuit $W(\vartheta)$ (constructed from the classical witness $\vartheta$) to a standard initial state $\rho_0$.
        \item Measuring the observable $O(x_i)$ to estimate its expectation value. This can be done efficiently if $O(x_i)$ is a sum of a polynomial number of Pauli strings.
    \end{enumerate}
    The verifier repeats this process to obtain a sufficiently accurate estimate of each $f_\vartheta(x_i)$, computes the total empirical risk, and accepts if the estimate is $\le a$ (or, more precisely, below a threshold between $a$ and $b$). The promise gap $b-a \ge 1/\text{poly}(n)$ ensures this verification is reliable. Since the witness is classical and the verifier is an efficient quantum circuit, the problem is in QCMA.
\end{itemize}

The second part of the proof consists of a reduction of the risk minimization to an energy minimization, and then illustrating the computational hardness of the resulting problem. We start by stating the empirical risk functional and rewrite it algebraically, to get
        \begin{align}
            R_S[f_\vartheta] = \frac{1}{N} \sum_{i=1}^{N} (\mathrm{Tr}[\rho(\vartheta) O(x_i)] - y_i)^2\,.
        \end{align}
        Since $\mathrm{Tr}[\rho(\vartheta)] = 1$, we can express the scalar $y_i$ as $\mathrm{Tr}[\rho(\vartheta) (y_i \mathbb{I})]$. This allows us to combine the terms inside the trace as
        \begin{align}
            \mathrm{Tr}[\rho(\vartheta) O(x_i)] - y_i = \mathrm{Tr}[\rho(\vartheta) (O(x_i) - y_i \mathbb{I})]\,.
        \end{align}
        Let $A_i = O(x_i) - y_i \mathbb{I}$. We use the identity $(\mathrm{Tr}[\rho A])^2 = \mathrm{Tr}[(\rho \otimes \rho) (A \otimes A)]$, in order to get
        \begin{align}
            (\mathrm{Tr}[\rho(\vartheta) A_i])^2 = \mathrm{Tr}[ (\rho(\vartheta) \otimes \rho(\vartheta)) (A_i \otimes A_i) ].
        \end{align}
    Substituting this back into the expression for the risk, we obtain
    \begin{align}
            R_S[f_\vartheta] &= \frac{1}{N} \sum_{i=1}^{N} \mathrm{Tr}\left[ (\rho(\vartheta) \otimes \rho(\vartheta)) (A_i \otimes A_i) \right] \\
             \nonumber
            &= \mathrm{Tr}\left[ (\rho(\vartheta) \otimes \rho(\vartheta)) \left( \frac{1}{N} \sum_{i=1}^{N} (A_i \otimes A_i) \right) \right] \\
             \nonumber
            &= \mathrm{Tr}\left[ (\rho(\vartheta) \otimes \rho(\vartheta)) H(S) \right]\,,
        \end{align}
        which completes the reduction. We now prove that the problem of minimizing this energy is QCMA-hard. We do this by reducing the known QCMA-complete problem, \textit{low complexity low energy states}~\cite{wocjan2003two}, to our problem. An instance of this problem consists of a $3$-local Hamiltonian $H'$, a complexity bound $k$ (number of elementary gates), and two real numbers $\alpha, \beta$ with a promise gap $\beta - \alpha \ge 1/\text{poly}(n)$. The problem is to decide if:
    \begin{enumerate}
        \item (YES case) There exists a quantum circuit $V$ of at most $k$ gates such that the state vector $|\psi\rangle = V|0\rangle^{\otimes n}$ has energy $\langle\psi|H'|\psi\rangle \le \alpha$.
        \item (NO case) All quantum circuits $V$ of at most $k$ gates prepare state vectors $|\psi\rangle$ with energy $\langle\psi|H'|\psi\rangle \ge \beta$.
    \end{enumerate}
    We construct an instance of our learning problem from this QCMA-complete instance as follows:
    \begin{itemize}
        \item \textbf{PQC architecture:} The family of circuits $W(\vartheta)$ is defined as the set of all quantum circuits with at most $k$ gates. The classical parameters $\vartheta$ are a description of such a circuit.
        \item \textbf{Training set:} We use a single data point, $S = \{(x_1, y_1)\}$.
        \item \textbf{Observable and label:} We set the observable to be the given Hamiltonian, $O(x_1) = H'$, and the label to be zero, $y_1 = 0$.
    \end{itemize}
    With this setup, the Hamiltonian for our learning task, $H(S)$, becomes
    \begin{align}
        H(S) = (O(x_1) - y_1 \mathbb{I}) \otimes (O(x_1) - y_1 \mathbb{I}) = H' \otimes H'\,.
    \end{align}
    The learning task is to find $\min_{\vartheta} R_S[f_\vartheta]$. As we have seen previously, the empirical risk for a quantum state $\rho(\vartheta) = W(\vartheta)\rho_0W^\dagger(\vartheta)$ becomes the squared expectation value 
        \begin{align}
            R_S[f_\vartheta] = \mathrm{Tr}\left[\left(\rho(\vartheta) \otimes \rho(\vartheta)\right) (H' \otimes H') \right] = \left(\mathrm{Tr}[\rho(\vartheta) H']\right)^2\, 
        \end{align}
        of this Hamiltonian. Minimizing the risk over all PQC parameters $\vartheta$ is therefore equivalent to finding the minimum of $(\langle\psi|H'|\psi\rangle)^2$ over the set of all state vectors $|\psi\rangle$ preparable by circuits with at most $k$ gates. This directly maps the promise gap of the 
        \emph{low complexity low energy states problem}:
        
    \begin{itemize}
        \item \textbf{YES case:} If there is a preparable state vector $|\psi\rangle$ 
        with energy $\langle\psi|H'|\psi\rangle \le \alpha$, then the minimum risk is $\le \alpha^2$.
        \item \textbf{NO case:} If all preparable state vectors $|\psi\rangle$ have energy $\langle\psi|H'|\psi\rangle \ge \beta$, then the minimum risk is $\ge \beta^2$.
    \end{itemize}
    This provides a valid reduction, mapping the QCMA-complete problem to our learning problem with risk thresholds $a=\alpha^2$ and $b=\beta^2$. The promise gap remains inverse-polynomial, so the problem is QCMA-hard.   
    Since the problem is in QCMA and is QCMA-hard, it is QCMA-complete.
    \end{proof}

\end{document}